\documentclass[final]{article}

\usepackage[utf8]{inputenc}
\usepackage{mdframed}
\usepackage{graphicx}
\usepackage{mathtools}
\usepackage{amsmath}
\usepackage{amsthm}
\usepackage{color}
\usepackage{amssymb}
\usepackage{amsfonts}

\usepackage{algorithmic}
\usepackage{algorithm}
\usepackage{tcolorbox}

\newtheorem{theorem}{Theorem}

\newtheorem{claim}{Claim}

%\newenvironment{proof}{\textit{Proof:}}{\qed}

% % %   Mathematical Operators  % % % 

\newcommand{\abs}[1]{\left| #1 \right|}
%-----------------------------------------------------------------------

\newcommand{\ckm}{\text{CkM}}
\newcommand{\ckmp}{\text{CkMP}}

% % % %  Optimal and Our Solution % % %

\newcommand{\opt}{{\cal O}}

\newcommand{\opthat}{\hat{\cal O}}

\newcommand{\opthattwo}{{\cal O}'}
\newcommand{\optbar}{\bar{\cal O}}

\newcommand{\soln}{{\cal S}}

\newcommand{\solnhat}{\hat{\cal S}}
\newcommand{\solnhattwo}{{\cal S}'}
\newcommand{\solnbar}{\bar{\cal S}}

\newcommand{\cost}[1]{cost({#1})}

\newcommand{\sj}[1]{{\mathcal S}_{#1}}
\newcommand{\oj}[1]{{\mathcal O}_{#1}}

% % % % Problem Definition % % % %
\newcommand{\fac}{{\cal F}}
\newcommand{\cli}{{\cal C}}

\newcommand {\bag}{{\cal B}}

\newcommand{\dist}[2]{c(#1,~#2)}

%\newcommand{\ckm}{\emph{Capacitated k-Median}}

% % % For CKM Paper % % %
\newcommand{\bs}[1]{{\cal B}_{\cal S}({#1})}

\newcommand{\bo}[1]{{\cal B}_{\cal O}({#1})}

\newcommand{\bol}[1]{{\cal B}_{\cal O}^{L}({#1})}

\newcommand{\ball}[2]{{\cal B} ( {#1}, {#2} ) }
\newcommand{\balll}[2]{{\cal B}_L ( {#1}, {#2} ) }

\newcommand{\sigmas}[1]{\pi_{\soln}(#1)}
\newcommand{\sigmao}[1]{\pi_{\opt}(#1)}

\newcommand{\swapone}[2]{swap({#1},{#2})}
\newcommand{\swaptwo}[4]{double$-$swap(\{{#1},{#2} \},\{{#3},{#4}\})}

\newcommand{\swapmulti}[2]{\textit{multi-swap}({#1},{#2})}

%New macors 06-12-16
\newcommand{\ie}{\textit{i}.\textit{e}.}
\newcommand{\facilityset}{\mathcal{F}}	

\newcommand{\capacity}{\textit{U}}

\newcommand{\light}{{\mathcal S_L}}
\newcommand{\heavy}{{\mathcal S_H}}
\newcommand{\good}{{\mathcal S_g}}
\newcommand{\bad}{{\mathcal S_b}}
\newcommand{\nice}{{\mathcal S_n}}
\newcommand{\sw}{{\mathcal S_W}}
\newcommand{\sB}{{\mathcal S_{\bag}}}

\newcommand{\op}{{\mathcal O}_{sp}}
\newcommand{\og}{{\mathcal O}_{g}}
\newcommand{\ob}{{\mathcal O}_{b}}
\newcommand{\on}{{\mathcal O}_{n}}
\newcommand{\ow}{{\mathcal O_W}}
\newcommand{\oB}{{\mathcal O_{\bag}}}

\newcommand{\tsoln}{{\mathcal T_{\soln}}}
\newcommand{\topt}{{\mathcal T_{\opt}}}

\newcommand{\meta}[1]{{#1}^{*}}

\newcommand{\vs}[1]{{v}({#1})}

\newcommand{\dom}[1]{{\mathcal O}_d(#1)}
\newcommand{\domtwo}[2]{{\mathcal O}_d(\{#1,#2\})}
\newcommand{\vstwo}[2]{{v}(\{#1,#2\})}

\newcommand{\floor}[1]{\bigl\lfloor#1\bigr\rfloor}

\newcommand{\sumlimits}[2]{\sum_{#1}^{#2}}

\newcommand{\etal}{\textit{et~al}.~}

\newcommand{\pj}[1]{{p}_{#1}}

\newcommand{\pencli}[1]{{\cal P}({#1})}

\newcommand{\doo}[1]{{D}({#1})}
\newcommand{\dop}[1]{{D'}({#1})}
\newcommand{\po}[1]{{P}({#1})}
\newcommand{\dpo}{{\mathcal D}_{sp}}

\newcommand{\ssp}{{\mathcal S}_{sp}}
\newcommand{\swaptwopo}[2]{double$-$swap({#1},{#2})}

\bibliographystyle{plain}

\begin{document}
	
	\title{Improved Local Search Based Approximation Algorithm for Hard Uniform Capacitated k-Median Problems}
	
	\maketitle
	\begin{center}
		\author{Neelima Gupta$^1$,}
		\author{Aditya Pancholi$^2$}
	\end{center}
	\begin{enumerate}
		\item {Department of Computer Science, University of Delhi, India.\\
			\texttt{ngupta@cs.du.ac.in}}
		\item {Department of Computer Science, University of Delhi, India.\\
			\texttt{apancholi@.cs.du.ac.in}}
		%\authorrunning{S. Grover et. al.}
	\end{enumerate}

	\begin{abstract}
		
		In this paper, we study the hard uniform capacitated $k$- median problem 
		using local search heuristic. Obtaining a constant factor approximation for the problem is open. All the existing solutions giving constant-factor approximation, violate at least one of the constraints (\emph{cardinality}/ \emph{capacity}). All except Koruplou \etal \cite{KPR} are based on LP-relaxation. 
		
		We give $(3+\epsilon)$ factor approximation algorithm for the problem violating the cardinality by a factor of $8/3 \approx 2.67$.
		There is a trade-off between the approximation factor and the cardinality violation between our work and the existing work.
		Koruplou \etal \cite{KPR} gave $(1 + \alpha)$ approximation factor with $(5 + 5/\alpha)$ factor loss in cardinality using local search paradigm. Though the approximation factor can be made arbitrarily small, cardinality loss is at least $5$. 
		On the other hand, we improve upon the result of  Aardal \etal \cite{capkmGijswijtL2013} in terms of factor-loss. They gave $(7+\epsilon)$ factor approximation, with the cardinality violation by a factor $2$. 
		Most importantly, their result is obtained using LP-rounding, whereas local search techniques are straightforward, simple to apply and have been shown to perform well in practice via empirical studies.	
		
		We extend the result to hard uniform capacitated $k$-median with penalties.
		 To the best of our knowledge, ours is the first result for the problem.
		
	\end{abstract}

	\section{Introduction}
		$k$ - median problem is one of the extensively studied problem in literature {\cite{capkmGijswijtL2013, Archer03lagrangianrelaxation, kmedian, arya,Byrkaesa2015, charikar,Charikar:1999,jain2002new,  Jain:2001,li2013approximating}}.
		The problem is known to be NP-hard.
		The input instance consists of a set $\fac$ of facilities, a set $\cli$ of clients, a non-negative integer $k$  
		and a non-negative cost function $c$ defining cost to connect clients to the facilities. {Metric version of the problem assumes that $c$ is symmetric and satisfies triangle's inequality}. The goal is to select a subset $\soln \subseteq \fac$ as centers with $|\soln| \leq k$ (cardinality constraint) and to assign clients to them such that the total cost of serving the clients from centers is minimum. In {\em capacitated} version of the problem, we are also given a bound $u_i$ on the maximum number of clients that facility $i$ can serve.  
		{The {\em soft} capacity version allows a facility to be opened any number of times whereas the {\em hard} capacity version restricts the facilities to be opened at most once.} In $k$-median with penalties, each client $j$ has an associated penalty $p_j$ and we are allowed not to serve some clients at the cost of paying penalties for them. In this paper, we address the hard-capacitated $k$ median ($\ckm$) problem and its penalty variant($\ckmp$), when the capacities are uniform $\ie$ $u_i = \capacity$ for all $i \in \facilityset$. Our results are stated in Theorems~\ref{theo-ckm} and~\ref{theo-ckmpen}. For the problems, we define an $(a,b)$-approximation algorithm as a polynomial-time algorithm that computes a solution using at most $bk$ number of facilities with cost at most $a$ times the cost of an optimal solution using at most $k$ facilities.
		
		\begin{theorem}
			\label{theo-ckm}
			There is a polynomial time local search heuristic that approximates hard uniform capacitated $k$ median problem within $(3 + \epsilon)$ factor of the optimal violating the cardinality by a factor of $\frac{8}{3}$.
		\end{theorem}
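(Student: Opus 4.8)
The plan is to run a local search over feasible solutions whose moves are $\add{o}$, $\delete{s}$, and bounded-size $\swapmulti{A}{B}$ (close a set $A \subseteq \soln$ and open a set $B \subseteq \fac \setminus \soln$ with $|B|-|A|$ at most a constant $p = \lceil 1/\epsilon \rceil$). A move is executed only if the resulting solution remains feasible --- every client assignable within the uniform capacity $\capacity$ --- stays within $\tfrac{8}{3}k$ open centers, and lowers the total service cost by at least a $(1-\epsilon/\mathrm{poly}(n))$ factor. Checking feasibility and computing the induced optimal reassignment after a move is an integral transportation problem, solvable in polynomial time; the usual cost-scaling argument then bounds the number of improving moves and yields a polynomial-time algorithm halting at an approximate local optimum $\soln$. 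All the remaining difficulty is in the analysis that bounds $\cost{\soln}$ against $\cost{\opt}$ at such a local optimum.

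Fix the optimum $\opt$ with assignment $\sigmao{\cdot}$ and the local optimum $\soln$ with $\sigmas{\cdot}$. I would first partition $\soln = \light \cup \heavy$ into \emph{light} centers, each serving fewer than some fixed fraction of $\capacity$ clients, and \emph{heavy} centers serving more: light centers have spare capacity to absorb rerouted clients, whereas heavy centers are near saturation and can be vacated only after a nearby optimal center has been opened to receive their load. I would then build a mapping $\pi$ sending each optimal center to a close-by open center and use it to group the centers into bags $\bag$, so that the numbers of optimal, heavy, and light centers inside each bag are controlled. This grouping is the engine of the proof: each bag licenses a fixed repertoire of test moves whose client reassignments stay within the bag, and the bag sizes are precisely what will deliver the factor $3$ together with the cardinality ratio $8/3$.

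For every bag I would exhibit candidate moves --- an $\add{o}$ when spare capacity for an optimal center $o$ already exists, and a $\swapmulti{A}{B}$ that opens optimal centers while closing the heavy centers they dominate --- and, since $\soln$ is locally optimal, read off for each the inequality $(\text{change in service cost}) \ge -\epsilon' \cost{\soln}$. The crux is the capacity-respecting reassignment realizing a move: a client $j$ displaced from $\sigmas{j}$ is rerouted either directly to its optimal center at cost $\dist{j}{\sigmao{j}}$, or, when that center is already full, along a detour through another client $j'$ and its two centers at cost at most $\dist{j}{\sigmao{j}} + \dist{\sigmao{j}}{j'} + \dist{j'}{\sigmas{j'}}$ by the triangle inequality. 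Charging each optimal center at most once and each solution center a bounded number of times, these detour terms sum to a bounded multiple of $\cost{\opt}$ up to a $1/p$ error; adding in the displaced clients' own optimal service cost and absorbing the $\epsilon'$ slacks gives $\cost{\soln} \le (3+\epsilon)\cost{\opt}$.

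The cardinality bound follows by checking that $\soln$, and every intermediate solution produced along the way, keeps at most $\tfrac{8}{3}k$ open centers, which reduces to counting, over all bags, the ratio of solution centers to optimal centers a bag may contain. The step I expect to be the main obstacle is reconciling the two opposing pressures inside the test moves: the reassignment must honor the hard uniform capacity $\capacity$, which favors keeping \emph{more} centers open so displaced clients have a feasible home, while the cardinality cap forces \emph{fewer}. Calibrating the light/heavy threshold and the bag sizes so that feasibility never breaks yet the count is held to $8/3$ --- all without letting the rerouting cost exceed $3$ --- is the delicate balance on which the theorem rests.
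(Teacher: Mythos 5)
Your plan reproduces the paper's architecture at the top level --- cost-improving local search with swap moves, a light/heavy split of $\soln$, a cyclic client-to-client mapping realizing the reassignments, the triangle-inequality detour $\dist{j}{\sigmao{j}} + \dist{\sigmao{j}}{\tau(j)} + \dist{\tau(j)}{\sigmas{\tau(j)}}$, and local-optimality inequalities summed with bounded multiplicities --- but it stops exactly where the actual proof begins. The step you flag as ``the main obstacle'' (calibrating the light/heavy threshold and the bag structure so that a capacity-respecting reassignment exists while each center is charged a bounded number of times) is not a detail to be absorbed later; it is the entire technical content of the theorem. Concretely, the paper declares a facility heavy when it serves more than $\frac{3}{5}\capacity$ clients and \emph{never} closes a heavy facility --- whereas your plan proposes vacating heavy centers into newly opened optimal ones, which would require rerouting more than $\frac{3}{5}\capacity$ clients and comes with no feasibility guarantee. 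It fixes the cardinality at $\frac{8}{3}k$ for a reason your bag-ratio accounting does not capture: since all clients number at most $k\capacity$, fewer than $\frac{5}{3}k$ facilities can be heavy, so at least $k$ light facilities always exist to participate in swaps. And it needs a second, capacity-driven relation alongside domination: $s$ \emph{covers} $o$ when $\ball{s}{o} > \frac{2}{5}\capacity$, so that a light facility (load at most $\frac{3}{5}\capacity$) receiving at most $\frac{2}{5}\capacity$ mapped clients stays within $\capacity$.

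The part your proposal has no counterpart for is the case analysis this forces. An optimal facility can be covered by \emph{two} light facilities (the set $\op$), and for such a facility the cyclic mapping is infeasible no matter how the thresholds are tuned; the paper must treat these pairs as units, classify pairs and single facilities into nice/good/bad, park the bad ones in $\opthat$, bank the nice triplets in $\bag$, and prove the counting claim $\abs{\bag} + \abs{\nice} \geq \frac{1}{3}\left( \abs{\opthat} + \abs{\on}\right)$ so that every facility of $\opt$ dominated by a bad facility or bad pair can still be swapped in at most three times while no facility of $\light$ is closed more than three times. Without this (or an equivalent) construction, the local-optimality inequalities you intend to ``read off'' cannot even be written down: there is no guarantee that your test moves admit any capacity-feasible reassignment, and no bound on how often a given $s \in \soln$ or $o \in \opt$ appears across them. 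Finally, getting $3+\epsilon$ rather than $9+\epsilon$ requires the paper's further partition-and-shrink argument for multi-swaps, with inequalities weighted by $1/(q'-1)$ or $1/(q'-2)$ to drive the multiplicities down to $1 + 4/(p-2)$; your assertion that one can ``charge each optimal center at most once'' is the desired conclusion of that argument, not a property the moves you defined provide.
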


	In contrast to the LP-based algorithms, local search technique is known to be straightforward , simple to apply and has been shown to perform well in practice via empirical studies \cite{KPR,KH:warehouse}. 
	Power of local search technique over the LP-based algorithms is well exhibited by the fact that there are constant factor approximation ($3$ for uniform and $5$ for non-uniform) \cite{Aggarwal,Bansal} for capacitated facility location problem whereas the natural LP is known to have an unbounded integrality gap.
	On the other hand,	local search heuristics are 
	notoriously hard to analyze. This is evident from the fact  that the only work known, based on local search heuristics, for $\ckm$, is due to Korupolu \etal \cite{KPR} more than $15$ years ago.

	Our work provides a trade-off between the approximation factor and the cardinality violation with the existing work. Koruplou \etal \cite{KPR} gave $(1 + \alpha)$ approximation factor with $(5 + 5/\alpha)$ factor loss in cardinality using local search paradigm. Though the approximation factor can be made arbitrarily small, cardinality loss is at least $5$. 
	Small approximation factor is obtained at a big loss in cardinality. 
	For example, for $\alpha$ anything less than $1$, cardinality violation is more than $10$.
	To achieve $3$ factor approximation using their heuristic, cardinality violation is $7.5$. Thus, we improve upon their result in terms of cardinality.
	On the other hand, we improve upon the results in \cite{capkmGijswijtL2013,capkmshili2014,Lisoda2016} in terms of factor-loss though the cardinality loss is a little more in our case. Aardal \etal \cite{capkmGijswijtL2013} gave $(7+\epsilon)$ factor approximation, with the violation of cardinality by a factor $2$ using LP Rounding. $O(1/\epsilon^2)$ factor approximation is given in \cite{capkmshili2014,Lisoda2016} violating the cardinality by a factor of $(1 + \epsilon)$ using sophisticated strengthened LPs.
	 
		\begin{theorem}
			\label{theo-ckmpen}
			There is a polynomial time local search heuristic that approximates hard uniform capacitated $k$ median problem with penalties within $(3 + \epsilon)$ factor of the optimal violating the cardinality by a factor of $\frac{8}{3}$.
		\end{theorem}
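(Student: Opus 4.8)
The plan is to reduce $\ckmp$ to $\ckm$ by augmenting the instance with a single \emph{penalty facility} $\phi$ and then invoking the machinery behind Theorem~\ref{theo-ckm} almost verbatim. Given an instance $\instval$ of $\ckmp$ with penalties $\pj{j}$, I would build the augmented instance $(\cli, \fac \cup \{\phi\}, c', k)$ in which $\phi$ has unbounded capacity, $\distprime{j}{\phi} = \pj{j}$ for every client $j$, $\distprime{j}{i} = \dist{j}{i}$ for every original facility $i$, and---crucially---$\phi$ is permanently open, never counts against the cardinality budget, and is never a candidate for $\delete{\phi}$ or for being swapped out. A client assigned to $\phi$ in a solution of the augmented instance is precisely a client that pays its penalty in $\ckmp$, so feasible solutions of the two instances are in cost-preserving bijection and it suffices to analyse the local search on the augmented instance.

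First I would argue that the local-search moves carry over unchanged. Since $\phi$ is always present, the operations $\add{\cdot}$, $\delete{\cdot}$ and $\swapmulti{\cdot}{\cdot}$ act only on the real facilities, so the number of opened real facilities obeys exactly the accounting of Theorem~\ref{theo-ckm} and the cardinality violation stays at $\frac{8}{3}$. The only genuinely new effect is that a (re)assigned client may now be routed to $\phi$ at cost $\pj{j}$; because $\phi$ has infinite capacity this option is always available and places no capacity obligation on the real facilities, which makes the reassignment arguments easier rather than harder, since the infinite-capacity fallback can absorb any client the analysis would otherwise have to reroute.

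Next I would revisit the test-swap analysis bounding $\cost{\soln}$ against $\cost{\opt}$, splitting clients by their status in $\soln$ and in $\opt$. Clients served by real facilities in both solutions are treated exactly as in Theorem~\ref{theo-ckm}. For a client penalized in $\opt$ its optimal contribution is $\pj{j}$, and in every test reassignment I can route it to $\phi$, charging the rerouting to $\pj{j}$; clients penalized in both solutions contribute the same $\pj{j}$ on each side. For a client penalized in $\soln$ but served by $\opt$ at facility $\sigmao{j}$, local optimality guarantees that its current cost $\pj{j}$ is no larger than the cost of serving it from the representative of $\sigmao{j}$ opened by the test move, so it never forces an increase. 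Substituting these bounds into the inequalities of Theorem~\ref{theo-ckm} reproduces the same $(3+\epsilon)$ factor.

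The main obstacle I anticipate is not the cost bound itself but the fact that $\phi$ carries unbounded---hence non-uniform---capacity, whereas Theorem~\ref{theo-ckm} is established under uniform capacities. I must therefore verify that this one special facility does not disturb the capacity-respecting client-redistribution step at the heart of the capacitated analysis: the bags built around real facilities, and the matchings used to reroute demand when a real facility is swapped out, must remain valid once the penalized clients are set aside. I would handle this by freezing the set of $\phi$-assigned clients throughout each test swap, so that they never enter the capacity bookkeeping of any real facility, and by checking that every structural lemma behind Theorem~\ref{theo-ckm} is phrased over the real, uniform-capacity facilities alone, rendering it insensitive to the presence of $\phi$. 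Once this separation is justified, Theorems~\ref{theo-ckm} and~\ref{theo-ckmpen} share a single proof modulo the penalty-client case analysis above.
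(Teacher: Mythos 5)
Your overall strategy---introducing a permanently open penalty facility and re-running the swap analysis with a case split on each client's penalty status---is the same device the paper uses (its pseudo-facility $\delta$ with $u_{\delta} = \abs{\cli}$ and $\dist{\delta}{j} = \pj{j}$ plays exactly the role of your $\phi$), so the algorithmic side and the cardinality accounting are fine. The gap is in your treatment of the clients that pay penalty in $\soln$ but are served in $\opt$, i.e. $j \in \bo{o} \cap \pencli{\soln}$. You assert that ``local optimality guarantees that its current cost $\pj{j}$ is no larger than the cost of serving it from the representative of $\sigmao{j}$,'' and on that basis you leave these clients frozen at $\phi$ during every test swap. Local optimality yields only aggregate inequalities, one per test swap; it gives no per-client bound, and the implied inequality $\pj{j} \le \oj{j}$ is false in general: capacities can force $\soln$ to pay a large penalty for a client that $\opt$ serves at distance $0$ from a facility that is closed in $\soln$ (or open but saturated), and the swap that would open $\sigmao{j}$ can still be globally non-improving because of the reassignment cost of the facility it closes.

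The consequence is fatal to the bound, not merely to the lemma. The quantity you must bound includes $\sum_{j \in \pencli{\soln} \setminus \pencli{\opt}} \pj{j}$, and the only way a term $-\pj{j}$ ever enters a local-optimality inequality is if some test swap actually reroutes $j$ away from the penalty facility. This is exactly what the paper does: in its inequality for $\swapone{s}{o}$ the clients of $\bo{o} \cap \pencli{\soln}$ are reassigned to the swapped-in facility $o$ (feasible since $\abs{\bo{o}} \le \capacity$), contributing the term $\sum_{j \in \bo{o} \cap \pencli{\soln}} (\oj{j} - \pj{j})$; summed over all swaps, this is precisely what puts the penalty cost of $\soln$ on the bounded side of the final inequality. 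Under your freezing rule those penalties never appear with a negative sign in any inequality, so the rearrangement bounds only the service cost of $\soln$ plus the penalties it shares with $\opt$, not $\cost{\soln}$. The fix is to drop the freezing for this class of clients (keeping it only for $j \in \pencli{\soln} \cap \pencli{\opt}$, as the paper implicitly does) and reroute $\bo{o} \cap \pencli{\soln}$ to $o$ in every swap where $o$ enters; with that change the rest of your outline matches the paper's proof.
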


	To the best of our knowledge, no result is known for $\ckmp$. 	
	
		\subsection{Related Work}	
		Both, LP-based algorithms as well as local search heuristics, have been used to obtain good approximate algorithms for the (uncapacitated) $k$-median problem. {\cite{capkmGijswijtL2013, Archer03lagrangianrelaxation, kmedian, arya,Byrkaesa2015, charikar,Charikar:1999,jain2002new,  Jain:2001,li2013approximating}}. The best known factor of $2.611 + \epsilon$ was given by Byrka \etal \cite{Byrkaesa2015}. 
		Obtaining a constant approximation factor for $\ckm$ is an open problem. Natural LP is known to have an unbounded integrality gap when one of the constraints (cardinality/capacity) is allowed to be violated by a factor of less than $2$ without violating the other constraint, even for uniform capacities.

		Several constant factor approximations are known~\cite{capkmByrkaFRS2013,charikar,Charikar:1999,capkmshanfeili2014,Groveretal2016} for the problem that violate the capacities  by a factor of $2$ or more.  
		A $(7 + \epsilon)$ algorithm was given by Aardal \etal ~\cite{capkmGijswijtL2013} violating the cardinality constraint by a factor of $2$.
		Koruplou \etal \cite{KPR} gave $(1 + \alpha)$ approximation factor with $(5 + 5/\alpha)$ factor loss in cardinality.
		Very recently, Byrka \etal ~\cite{ByrkaRybicki2015} broke the barrier of $2$ in capacities and  gave an  $O(1/\epsilon^2)$ approximation violating capacities by a factor of $(1 + \epsilon)$ factor for uniform capacities.  For non-uniform capacities, a similar result has been obtained by Demirci \etal\  in ~\cite{Demirci2016}. 	
		Li~\cite{capkmshili2014,Lisoda2016} strengthened the LP to break the barrier of $2$ in cardinality and gave an $(O(1/\epsilon^2 \ log (1/\epsilon)))$ approximation using at most $(1 + \epsilon) k$ facilities. 
		Though the algorithm violates the cardinality only by $1 + \epsilon$, it introduces a softness bounded by a factor of $2$. The running time of the algorithm is $n^{O(1/\epsilon)}$.

		The other commonly used technique for the problem is local search \cite{kmedian,charikar,KPR} with the best factor of $3+ \epsilon$ given by Arya \etal \cite{kmedian}.	Local search technique has been particularly useful to deal with capacities for the facility location problem~\cite{Chudak, paltree,zhangchenye, gupta2008simpler, vygen,Aggarwal,Bansal}.
	
	Some results are known for the penalty variant of (uncapacitated) facility location problems, TSP and steiner network problems~\cite{charikar2001algorithms, Jain, xu2005lp,Xu2017,Goemans:1995:GAT:207985.207993, Bienstock1993}.
	For the capacitated variant of facility location problem with penalties, $5.83 + \epsilon$ factor approximation for uniform and $8.532 + \epsilon$ factor for non-uniform capacities were given by Gupta and Gupta in~\cite{guptangupta}. This is the only result known for the problems with extension on capacities as well as penalties.

	\subsection{High Level Idea}		
	 Let $\soln$ denote any feasible solution. The algorithm performs one of the following operations if it reduces the cost and it halts otherwise.
	 The local search operations are $\swapone{s}{o}; ~s\in \soln, o \in \fac\setminus\soln$ and $\swaptwo{s_1}{s_2}{o_1}{o_2}; ~s_1, s_2 \in \soln, o_1,o_2 \in \fac\setminus\soln$. Given a set of open facilities, min-cost flow problem is solved to obtain the optimal assignments of clients to opened facilities.
	 
	 To define the swaps and the reassignments for the purpose of analysis, we extend the ideas of  Arya \etal \cite{kmedian}. Swaps are defined so that every facility in optimal solution is swapped in at least once and at most thrice whereas facilities in our locally optimal solution is swapped out at most thrice.
	 When a facility in our locally optimal solution is swapped out, some of its clients are reassigned to other  facilities in our solution via a mapping similar to the one defined in \cite{kmedian}. However, for the capacitated case, mapping needs to be done a little carefully. An almost fully utilized facility may not be able to accommodate all the clients mapped to it and conversely a partially utilized facility may not be able to accommodate the load of an almost fully utilized facility. To address this concern, we partition the facilities of our locally optimal solution into {\em{heavy}} (denoted by $\heavy$) and {\em{light}} (denoted by $\light$). A facility is said to be {\em{heavy}}  if it serves more than ($3\capacity/5$) clients in our solution and is called {\em{light}}  otherwise. Heavy facilities neither participate in swaps nor the mapping.
	  Thus, mapping is defined  between the clients of light facilities only. We allow to open ($\frac{8}{3}k$) facilities in our solution so that we have at least $k$ light facilities.

	 There are two situations in which we may not be able to define a feasible mapping between the clients of two light facilities. First situation is explained as follows: Let $\opt$ denote some optimal solution, let $\mathcal{M}_o$ be the number of clients, a facility $o \in \opt$ shares with the light facilities of our solution. All the clients of a facility $s \in \light$ cannot be mapped to  clients of other facilities of our solution if $s$ shares more than $\mathcal{M}_o/2$ clients with $o$. Second situation arises when $s$ shares more than $2\capacity/5$ clients with $o$. In this case, mapping may be possible but it may not be feasible as the other facility $s'$, to which its clients are mapped, may not have sufficient available capacity to accommodate the clients of $s$. We say that $s$ {\em dominates} $o$ in the first case and that $s$ {\em covers} $o$ in the second case.

	 Although a facility $s \in \light$ may dominate several facilities in $\opt$, it can cover at most one facility in $\opt$. Whereas, a facility $o \in \opt$ can be dominated by at most one facility in $\light$, 
	 it can be covered by at most two facilities in $\light$. The scenario in which a facility $o\in \opt$ is covered by exactly $2$ facilities, say $s_1$ and $ s_2$ needs to be handled carefully. 
	  In this case, we say that $s_1$ as well as $s_2$ {\em{specially covers}} $o$. We denote the set of such facilities in $\opt$ as $\op$. Since mapping of clients of $s_1$ and $s_2$ cannot be done in $o$, we would like to swap $s_1$ and $s_2$ with $o$ and, assign their clients to $o$, i.e. we would like to perform $\swapone{\{s_1, s_2\}}{o}$. However since we do not have this operation, we look for one more facility $o'$ in $\opt$ so that we can perform double-swap of $\{s_1, s_2\}$ with $\{o, o'\}$. First we look for $o'$ such that $\{s_1,s_2\}$ together either dominate or cover $o'$. Clearly neither $s_1$ nor $s_2$, being light, can cover any facility other than $o$. Thus we look for $o'$ that is dominated by them. If $\{s_1, s_2\}$ do not dominate any facility other than $o$, we form a triplet $<s_1, s_2, o>$ and keep it aside. We call such triplets are nice triplets. They will be used to swap in some facilities of $\opt$ which are not swapped in otherwise.
	  If they dominate exactly one facility $o'$, then we perform double-swap of $\{s_1, s_2\}$ with $\{o, o'\}$. If they dominate at least two facilities other than $o$, then we cannot swap them out at all. We call such a pair of facilities as a bad pair.

	 Remaining facilities in $\light$ are classified as good, bad and nice. A facility that does not dominate any facility in $\opt$ is termed as {\em nice}. A nice facility can be swapped in with any facility in $\opt$. A facility $s$ that dominates exactly one facility $o$ in $\opt$ is termed as {\em good}. We perform (single) $swap(s, o)$ in this case.  A facility $s$ that dominates more than one facilities in $\opt$ is termed as {\em bad}. Bad facilities cannot participate in swaps. Let $\opthat$ denote the set of facilities of $\opt$ that are either dominated by bad facilities or by bad pairs in $\light$. Facilities of $\opthat$ are swapped in using the triplets (using $\swaptwo{.}{.}{.}{.}$ or the nice facilities (using $\swapone{.}{.}$). We show that the total number of triplets and the nice facilities is at least one thirds of $|{\opthat}|$ so that each facility of $\light$ is swapped out most $3$ times and each facility of $\opt$ is swapped in at least once and at most $3$ times (Note that in the process, the facilities of $\opt$ which were there in the triplets also get swapped thrice). Swapping in a facility of $\opt$ thrice contributes a factor of $3$ and swapping out a facility of $\light$ thrice contributes a factor of $6$ making a total of $9$ factor approximation.

	 Extending swap and double-swap to multi-swap, where upto $p, (p > 2)$ facilities can be swapped simultaneously, we are able to ensure that every $s \in \soln$ is swapped out at most $1 + 4/ (p-2)$ times,  and every $o \in \opt$ is swapped in at most $ 1 + 4/(p-2)$ times thereby reducing the factor to $(3+\epsilon)$.

	For $\ckmp$, we start with an initial feasible solution with $8k/3$ facilities from $\fac$. The clients are assigned by solving min cost flow problem over the facilities $\soln \cup \{\delta\}$, where $u_{\delta}=\abs{\cli}$ and $\forall j \in \cli, ~\dist{\delta}{j} = \pj{j}$. Clients assigned to $\delta$ pay penalty in the solution $\soln$. We bound the cost of the locally optimal solution, in the same manner as done for $\ckm$.

		\subsection{Local Search Paradigm} 
		
			Given a problem $\mathtt{P}$, local search algorithm starts with a candidate feasible solution $\soln$. A set of operations are defined such that performing an operation results in a new solution $\soln'$, called the neighbourhood solution of $\soln$. A solution $\soln$ may have more than one neighbourhood solutions. An operation is performed if it results in improvement in the cost.
			We formally describe the steps of the algorithm for a minimization problem.\\
		
			\textbf{The paradigm:} 
			\begin{enumerate}
				\item Compute an arbitrary feasible solution $\soln$ to $\mathtt{P}$.
				
				\item  \textbf{while} $\soln'$ is a neighborhood solution of $\soln$ such that $\cost{\soln'}<\cost{\soln}$ \\
				\textbf{do}
				$\soln \leftarrow \soln'$.
			\end{enumerate}
			
			The algorithm terminates at a locally optimal solution $\soln$, \ie   $\cost{\soln'}>\cost{\soln}$ for every neighborhood solution $\soln'$ 
			
			In the above algorithm presented, we move to a new solution if it gives some improvement in the cost, however small that improvement may be. This may lead to an algorithm taking lot of time. To ensure that the algorithm terminates in polynomial time, a local search step is performed only when the cost of the current solution $\soln$ is reduced by at least $\frac{\cost{\soln}}{p(n,\epsilon)}$, where $n$ is the size of the problem instance and $p(n,\epsilon)$ is an appropriate polynomial in $n$ and $1/\epsilon$ for a fixed $\epsilon > 0$. This modification in the algorithm incurs a cost of additive $\epsilon$ in the approximation factor.

	\subsection{Organization of the paper}
	  For the sake of easy disposition of ideas, we first present a weaker result for $\ckm$ in Section~\ref{sec-CkM-algo1}. The algorithm uses two operations: a (single) swap and a double swap and provides an ($9 + \epsilon, 8/3$) solution. The factor is subsequently improved to $(3 + \epsilon)$ in Section~\ref{sec-CkM-algo2} using multi-swap operation.
	  The results are then extended to $\ckmp$ in Section~\ref{sec-CkMP-algo}.

	\section{($9 + \epsilon, 8/3$) algorithm for Capacitated $k$-Median Problem}
		  \label{sec1} \label{sec-CkM-algo1}
		
		In this section, we present a local search algorithm that computes a solution with cost at most $9+\epsilon$ times the cost of an optimal.
		We start with an initial feasible solution selected as an arbitrary set of $8k/3$ facilities. Given a set of open facilities, optimal assignments of the clients is obtained by solving min-cost flow problem. 
		
		For any feasible solution $\soln$, algorithm performs one of the following operations, if it reduces the cost and terminates when it is no longer possible to improve the cost using these operations.
				
		\begin{enumerate}
		
				\item 
				$\swapone{s}{o}$: $\soln \leftarrow \soln \setminus \{s\} \cup \{o\}$, 
				%$\soln = \soln \setminus A \cup B$,
				$o \in \fac \setminus \soln$, 
				$s \in \soln$.

				\item 
				$\swaptwo{s_1}{s_2}{o_1}{o_2}$: $\soln \leftarrow \soln \setminus \{s_1, s_2\} \cup \{o_1, o_2\}$,
				%$\soln = \soln \setminus A \cup B$,
				$o_1, o_2 \in \fac \setminus \soln$, 
				$s_1, s_2 \in \soln$. 
		\end{enumerate}

		\begin{claim} 
		\label{suniono}
		For the locally optimal solution $\soln$, and optimal solution $\opt$ we have, 
		\begin{enumerate}
			\item $\cost{\soln \setminus \{s\} \cup \{o\}} \geq \cost{\soln}; \ \forall s\in \soln, o \in \opt$
			
			\item $\cost{\soln \setminus \{s_1, s_2\} \cup \{o_1, o_2\}} \geq \cost{\soln}; \ \forall s_1, s_2\in \soln, o_1, o_2 \in \opt$
		\end{enumerate}
	\end{claim}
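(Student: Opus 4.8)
The plan is to reduce every inequality in the statement to the defining property of local optimality. Recall that $\soln$ is locally optimal precisely because no operation of the form $\swapone{s}{o}$ with $o \in \fac \setminus \soln$, nor $\swaptwo{s_1}{s_2}{o_1}{o_2}$ with $o_1, o_2 \in \fac \setminus \soln$, strictly decreases the cost. The only gap between this and the claim is that the claim ranges over $o, o_1, o_2 \in \opt$, and an optimal facility may already belong to $\soln$ (i.e.\ $\opt \cap \soln \neq \emptyset$); in that case the target set is not the outcome of a defined move, so local optimality cannot be invoked directly and a small extra argument is needed.

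The key auxiliary fact I would establish first is monotonicity of the assignment cost in the set of open facilities: if $A \subseteq B \subseteq \fac$ then $\cost{A} \geq \cost{B}$, where $\cost{\cdot}$ denotes the optimal client-assignment cost computed by the min-cost flow (with an infeasible facility set assigned cost $+\infty$). This is immediate, since any feasible assignment of all clients using only the facilities of $A$ is also feasible using the facilities of $B$; hence the minimum over $B$ is at most the minimum over $A$.

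With monotonicity in hand I would split each part into cases according to which incoming facilities already lie in $\soln$. For part~1: if $o = s$ the statement is an equality; if $o \in \soln \setminus \{s\}$ then $\soln \setminus \{s\} \cup \{o\} = \soln \setminus \{s\} \subseteq \soln$, so monotonicity gives $\cost{\soln \setminus \{s\}} \ge \cost{\soln}$; and if $o \in \fac \setminus \soln$ the set is exactly the outcome of $\swapone{s}{o}$, so local optimality yields the inequality. For part~2 I would first cancel any coincidences such as $o_i = s_j$ (which collapse the double swap to a single swap) or $o_1 = o_2$, rewrite the target set, and then treat the remaining genuinely-added facilities as above. Whenever strictly fewer facilities are truly added than are removed, I would discard the surplus deletions to obtain a \emph{superset} of the target that is the outcome of a legitimate $\swapone{s}{o}$ or $\swaptwo{s_1}{s_2}{o_1}{o_2}$ move; local optimality bounds the cost of that superset from below by $\cost{\soln}$, and monotonicity then transfers the bound back down to the smaller target set. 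For instance, if $o_1 \in \fac\setminus\soln$ but $o_2 \in \soln$, then $\soln \setminus \{s_1,s_2\} \cup \{o_1\} \subseteq \soln \setminus \{s_1\} \cup \{o_1\}$, and the latter is a valid $\swapone{s_1}{o_1}$.

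The main obstacle here is bookkeeping rather than mathematical depth: I must ensure that \emph{every} overlap pattern among $\{s_1, s_2\}$ and $\{o_1, o_2\}$ — including $o_1 = o_2$, $o_i \in \soln$, and $o_i = s_j$ — is reduced to a genuine one- or two-facility swap whose open-facility set contains the target set, so that the combination of local optimality and monotonicity closes the inequality in all cases. I would lay this out as a short exhaustive table of cases, which keeps the argument transparent and guarantees completeness.
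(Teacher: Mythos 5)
Your proposal is correct and takes essentially the same route as the paper's own proof: both rest on the monotonicity observation that deleting facilities from an open set cannot decrease the optimal (min-cost flow) assignment cost, and both use it to reduce every overlap case ($o_i \in \soln$, $o_i = s_j$, etc.) to a genuine $\swapone{s}{o}$ or $\swaptwo{s_1}{s_2}{o_1}{o_2}$ whose cost is bounded below by $\cost{\soln}$ via local optimality. Your version is simply more explicit --- stating monotonicity as a standalone lemma and planning an exhaustive case table --- where the paper works one representative case and closes with ``all the other cases can be argued similarly.''
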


	\begin{proof}
		The claims follow trivially when $\soln \cap \opt = \phi$ by the local optimality of $\soln$. Next, suppose $\soln \cap \opt \ne \phi$. Let $\{s_1, o_1\} \in \soln \cap \opt$. Then $\soln \setminus \{s_1\} \cup \{o_1\} = \soln \setminus \{s_1\}$ if $s_1 \ne o_1$ and it is $= \soln$ otherwise. Clearly the cost of assignment to facilities in $\soln \setminus \{s_1\}$ can not be smaller than the cost of assignment to facilities in $\soln$. 
		If $s_2 \in \soln \setminus \opt, o_2 \in \opt \setminus \soln$, then $\cost{\soln \setminus \{s_2\} \cup \{o_2 \}} \ge \cost{\soln}$ (by the argument of single swap) and $\cost{\soln \setminus \{s_1, s_2\} \cup \{o_1, o_2\}} = \cost{\soln \setminus \{s_1, s_2\} \cup \{o_2\}} \ge \cost{\soln \setminus \{ s_2\} \cup \{o_2\}}$. All the other cases can be argued similarly.
	\end{proof}

		\subsection{Notations} \label{noatations}
			 
			Let $\soln$ denote the locally optimal solution and $\opt$ denote an optimal solution to the problem. Let $\bs{s}$ be the set of clients served by $s \in \soln$ and $\bo{o}$ be the set of clients served by $o \in \opt$.
			Let $\ball{s}{o}$ denote the set of clients served by $s\in\soln$ and $o\in\opt$ \ie $\ball{s}{o} = \bs{s} \cap \bo{o}$.
			For a client $j$, let $\sigmas{j}$ and $\sigmao{j}$ denote the facilities serving $j$ in $\soln$ and $\opt$ respectively. Let $\sj{j}$ and  $\oj{j}$ denote the service costs paid by $j$ in $\soln$ and $\opt$ respectively. 

		Facilities in $\soln$ are partitioned into {\em{heavy}} ($\heavy$) and {\em{light}} ($\light$).
		A facility $s \in \soln$ is said to be heavy if $ \bs{s} > \frac{3}{5} \capacity$ and light otherwise.
		When a facility in our locally optimal solution is swapped out, some of its clients are reassigned to other  facilities in our solution via a mapping similar to the one defined in \cite{kmedian}.
		We may not be able to define a feasible mapping for the heavy facilities. Thus heavy facilities are never swapped out and no client is mapped onto them for reassignment. 
		Consider a facility $o \in \opt$, let 
		$\bol{o} = \bo{o} \cap \cup_{s \in \light} \bs{s}$. Let $\mathcal{M}_o = \abs{\bol{o}}$. 
		
		%We partition the set $\light$ into sets $\ssp$ and $\soln \setminus \ssp$. Facilities in $\ssp$ and $\soln \setminus \ssp$ will be handled in slightly different manners. In order to define the set $\ssp$, 
		We introduce two concepts important to define the swaps and the mapping.
		\begin{itemize}
			\item  
			A facility $s \in \light$ is said to {\em dominate} $o$, if $\ball{s}{o} > \mathcal{M}_o/2$. 
			Note that a facility $o \in \opt$ can be dominated by at most one facility in $s \in \light$ where as a facility $s \in \light$ can dominate any number of facilities. 
			Extending the definition to set $T$, we say that a set $T \subseteq \light$ {\em dominates} $o \in \opt$ if $\sum_{s \in T} \ball{s}{o} > \mathcal{M}_o/2$. Let $\dom{T}$ denote the set of facilities dominated by $T$.
			When $T = \{s\}$, slightly abusing the notation we use $\dom{s}$ instead of $\dom{\{s\}}$.
			
			\item A facility $s \in \light$ is said to {\em cover} $o \in \opt$, if $\ball{s}{o} > \frac{2}{5} \capacity$.
			Note that if $ s \in \light$ then it can cover at most one facility in $\opt$. Also a facility $o \in \opt$ can be covered by at most $2$ facilities in $\light$. 
			Extending the definition to set $T \subseteq \light$ {\em covers} $o \in \opt$ if $\sum_{s \in T}{} \ball{s}{o} > \frac{2}{5}\capacity$. Let $\vs{T}$ denote the set of facilities covered by $T$. Also we will use $\vs{s}$ instead of $\vs{\{s\}}$ when $T = \{s\}$. 
		\end{itemize}

		\subsection{Analysis: The Swaps}
		\label{sec-swaps}
		Consider a set of facilities in $\opt$ such that each of them is covered by exactly two light facilities. Let $\op$ denote the set of such facilities. 
		For $ o \in \opt$, a $1-1$ and onto mapping $\tau : \bol{o} \rightarrow \bol{o}$ can be defined such that the following claim holds,
				
		\begin{claim}
		\label{claim:strongclaim} \label{taumapping}
		For $s \in \light$ and $o\in \opt | o \notin \dom{s}$
		\begin{enumerate}
			\item \label{prop1}
			$\tau(\ball{s}{o}) \cap \ball{s}{o} = \phi$.
			
			\item \label{prop2}
			If $ o \notin \op $ then $\abs{\{j \in \ball{s}{o} : \tau(j) \in \ball{s'}{o}\}} \le \frac{2}{5}\capacity,~\forall s' \neq s$.
			
		\end{enumerate}	
		\end{claim}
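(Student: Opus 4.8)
The plan is to reuse the cyclic-shift construction of Arya \etal \cite{kmedian}, refined to keep track of which light facility serves each client. Fix $o \in \opt$ and observe that the nonempty sets $\ball{s}{o}$, ranging over $s \in \light$, partition $\bol{o}$, since every client is served by exactly one facility of $\soln$ and $\bol{o}$ collects precisely the clients of $o$ whose server is light. List the $\mathcal{M}_o = \abs{\bol{o}}$ clients of $\bol{o}$ as $j_0, j_1, \dots, j_{\mathcal{M}_o - 1}$ so that clients lying in the same block $\ball{s}{o}$ occupy a contiguous arc of indices, read cyclically modulo $\mathcal{M}_o$.

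I would then define $\tau$ to be the cyclic shift $\tau(j_i) = j_{(i + t) \bmod \mathcal{M}_o}$ with $t = \floor{\mathcal{M}_o/2}$; this is visibly a bijection of $\bol{o}$ onto itself. For Property~\ref{prop1}, the key fact is that an arc of length $b$ on a cycle of length $\mathcal{M}_o$ is disjoint from its shift by $t$ exactly when $b \le t \le \mathcal{M}_o - b$. If $s$ does not dominate $o$ then $\abs{\ball{s}{o}} \le \mathcal{M}_o/2$, so $b = \abs{\ball{s}{o}} \le \floor{\mathcal{M}_o/2} = t$ and also $t \le \lceil \mathcal{M}_o/2\rceil \le \mathcal{M}_o - b$; hence the arc of $\ball{s}{o}$ and its image are disjoint, giving $\tau(\ball{s}{o}) \cap \ball{s}{o} = \phi$. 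A single shift value therefore works simultaneously for every non-dominating block, while the at most one facility that dominates $o$ is exactly the one excluded from the statement.

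For Property~\ref{prop2} I would first record the generic bound that the number of $j \in \ball{s}{o}$ with $\tau(j) \in \ball{s'}{o}$ is at most $\min(\abs{\ball{s}{o}}, \abs{\ball{s'}{o}})$, because such clients inject into $\ball{s'}{o}$ under the bijection $\tau$. It then suffices to show that for $s \ne s'$ at least one of the two blocks has size at most $\frac{2}{5}\capacity$. This is precisely where $o \notin \op$ is used: since at most two light facilities can cover $o$ and $o$ is not covered by exactly two, $o$ is covered by at most one light facility, so at most one block can exceed $\frac{2}{5}\capacity$. Consequently, for any distinct pair the smaller block is $\le \frac{2}{5}\capacity$, and the cross-traffic bound follows; the counts are well defined since $\mathcal{M}_o \le \abs{\bo{o}} \le \capacity$.

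The \textbf{main obstacle} is Property~\ref{prop2}, and in particular pinning down the exact role of the non-special-cover hypothesis: without it both $s$ and $s'$ could be covers, each with block exceeding $\frac{2}{5}\capacity$, and then a cyclic shift could route more than $\frac{2}{5}\capacity$ clients from one into the other, so no choice of shift would rescue the bound. The remaining work is bookkeeping: verifying that the arc-disjointness inequality $b \le t \le \mathcal{M}_o - b$ survives the floor/ceiling rounding for both parities of $\mathcal{M}_o$, and confirming that the single block permitted to dominate is harmlessly absorbed, because Property~\ref{prop1} is asserted only for $s$ with $o \notin \dom{s}$.
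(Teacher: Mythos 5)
Your proposal is correct and takes essentially the same route as the paper: you order $\bol{o}$ so that each block $\ball{s}{o}$ is contiguous and apply the cyclic shift by $\floor{\mathcal{M}_o/2}$, and your arc-disjointness criterion is just a repackaging of the paper's contradiction argument for Property~\ref{prop1}. For Property~\ref{prop2} you use the identical observation as the paper — since $o \notin \op$, at most one light facility covers $o$, so in any pair $\{s, s'\}$ the smaller block has size at most $\frac{2}{5}\capacity$, which bounds the number of cross-mapped clients.
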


				\begin{proof}
					$\tau$ can be defined as follows: Order the clients in $\bol{o}$ as $j_0, j_1,...,j_{\mathcal{M}_o-1}$ such that for every $s \in S$ with a nonempty $\ball{s}{o}$, the clients in $\ball{s}{o}$ are consecutive; that is, there exists $r,s,~ 0 \leq r \leq s \leq \mathcal{M}_o-1$, such that $\ball{s}{o} = \{j_r,...,j_s\}$. Define $\tau(j_p)=(j_q)$, where $q =(p +\floor{\mathcal{M}_o/2})~modulo~\mathcal{M}_o$. 
					
					We show that $\tau$ satisfies the claim.
					We prove (\ref{prop1}) using contradiction.
					Suppose if possible that both $j_p$, $\tau(j_p)=j_q \in \ball{s}{o}$ for some $s$, where $|\ball{s}{o}| \leq \mathcal{M}_o/2$. If $q = p+\floor{\mathcal{M}_o/2}$, then $|\ball{s}{o}| \geq q-p+1 =\floor{\mathcal{M}_o/2}+1 > \mathcal{M}_o/2$. If $q = p+\floor{\mathcal{M}_o/2} - \mathcal{M}_o$, then $|\ball{s}{o}| \geq p-q+1 = \mathcal{M}_o - \floor{\mathcal{M}_o/2}+1 > \mathcal{M}_o/2$. In either case, we have a contradiction, and hence mapping $\tau$ satisfies the claim.				
					
					For (\ref{prop2}), as $o \notin \op$, then at most one facility can cover $o$. If $\vs{s} = o$ then for all $s' \neq s, \ball{s'}{o} \leq \frac{2}{5} \capacity$. And if $\vs{s} \neq o$ then $\ball{s}{o} \leq \frac{2}{5} \capacity$. In either case the claim $\abs{\{j \in \ball{s}{o} : \tau(j) \in \ball{s'}{o}\}} \le \frac{2}{5}\capacity,~\forall s' \neq s$ holds true.
					
				\end{proof}

			Mapping $\tau$ is used to reassign the clients of a facility $s$ that is swapped out to other facilities $s' \in \light$. Claim (\ref{claim:strongclaim}.\ref{prop1}) ensures that if $s$ does not dominate $o$, then the client $j \in \balll{s}{o}$ is mapped to some $s' \neq s$, whereas claim (\ref{claim:strongclaim}.\ref{prop2}) ensures
			that if $o \notin \op$, then no more than $\frac{2}{5}k$ clients are mapped to $s'$. But if $o \in \op$ such that $\po{o}=\{s, s'\}$ , then more than $\frac{2}{5}k$ clients may get mapped to $s'$. This scenario poses a major challenge; thus facilities in $\op$ are considered separately while defining the swaps.

			For $o\in \op$, $\exists s_1,s_2 \in \light$ such that $s_1 \neq s_2$ and $\vs{s_1}=\vs{s_2}=o$.
			Consider a facility $o \in \op$, then let $\po{o}$ denote the set $\{s_1, s_2\}$ such that $s_1 \neq s_2, \vs{s_1}=\vs{s_2}=o$. Let $\ssp= \cup_{o \in \op} \po{o}$. Let $\doo{o} = \dom{\po{o}}$ and 		$\dop{o} = \doo{o}\setminus \{o\}$.
			Let $\dpo = \cup_{o \in \op} \dop{o}$. Figure \ref{fig0}(a) shows the relationship between $\ssp$ and $\dpo \cup \op$.
			The following claims hold.
		
				\begin{claim}
				\label{po_claim}
				\label{c3}
					$\forall o,o' \in \op, o \neq o'$ we have $\po{o} \cap \po{o'} = \phi $.
				\end{claim}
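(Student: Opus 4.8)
The plan is a short contradiction argument resting entirely on the covering bound recorded when \emph{cover} was introduced: a light facility covers at most one facility of $\opt$. First I would suppose, for contradiction, that $\po{o} \cap \po{o'} \neq \phi$ for some distinct $o, o' \in \op$, and fix a facility $s$ lying in this intersection.

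Next I would unpack what membership in the two pairs means. By the defining property of $\po{\cdot}$, having $s \in \po{o}$ gives $\vs{s} = o$, i.e. $s$ covers $o$ (equivalently $\ball{s}{o} > \frac{2}{5}\capacity$), and likewise $s \in \po{o'}$ gives $\vs{s} = o'$, i.e. $s$ covers $o'$. Thus the single facility $s$ covers both $o$ and $o'$. Since $s \in \light$, it can cover at most one facility of $\opt$, so these two covered facilities must coincide, forcing $o = o'$ and contradicting $o \neq o'$. Hence $\po{o} \cap \po{o'} = \phi$ for all distinct $o, o' \in \op$.

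I do not expect any genuine obstacle here; the content of the statement has already been absorbed into the ``at most one'' covering bound, which the claim simply reads off. For completeness it is worth recalling why that bound holds, since it is exactly the lever the proof uses: if $s$ covered two distinct $o_1, o_2 \in \opt$, then $\ball{s}{o_1}$ and $\ball{s}{o_2}$ would be disjoint subsets of $\bs{s}$ (a client is served by a unique optimal facility), each of size exceeding $\frac{2}{5}\capacity$, forcing $\bs{s} > \frac{4}{5}\capacity > \frac{3}{5}\capacity$ and contradicting the lightness of $s$. Viewed this way, the claim is just the assertion that the two-element covering sets $\po{o}$, indexed by the covered facility $o \in \op$, are pairwise disjoint because each light facility is allotted to at most one such index.
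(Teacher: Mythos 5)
Your proof is correct and takes essentially the same route as the paper: the paper's own proof picks $s \in \po{o} \cap \po{o'}$ and derives the contradiction directly from $\ball{s}{o} > \frac{2}{5}\capacity$ and $\ball{s}{o'} > \frac{2}{5}\capacity$ being incompatible with $s \in \light$, which is exactly the capacity argument you spell out. The only cosmetic difference is that you route the contradiction through the previously stated observation that a light facility covers at most one facility of $\opt$, whereas the paper inlines that same calculation.
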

		 
		 \begin{proof}
		 			Suppose if possible $\po{o} \cap \po{o'} \neq \phi$. Let $s \in \po{o} \cap \po{o'}$. This implies $\ball{s}{o} \geq 2\capacity/5$ and $\ball{s}{o'} \geq 2\capacity/5$ which is a contradiction as $s\in\light$.
		 \end{proof}

		 		\begin{claim}
		 		\label{do_claim}
		 		\label{c4}
		 			$\forall o,o' \in \op, o \neq o$ we have $\doo{o} \cap \doo{o'} = \phi$.
		 		\end{claim}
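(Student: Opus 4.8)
The plan is to prove the claim by contradiction, leveraging the disjointness of the special-covering pairs already established in Claim~\ref{c3}. Suppose, for distinct $o, o' \in \op$, some facility $o'' \in \opt$ lies in $\doo{o} \cap \doo{o'}$. By definition $\doo{o} = \dom{\po{o}}$ and $\doo{o'} = \dom{\po{o'}}$, so writing $\po{o} = \{s_1, s_2\}$ and $\po{o'} = \{s_1', s_2'\}$, the domination condition applied to $o''$ by each pair yields the two strict inequalities $\ball{s_1}{o''} + \ball{s_2}{o''} > \mathcal{M}_{o''}/2$ and $\ball{s_1'}{o''} + \ball{s_2'}{o''} > \mathcal{M}_{o''}/2$.

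Next I would invoke Claim~\ref{c3}, which guarantees $\po{o} \cap \po{o'} = \phi$; hence $s_1, s_2, s_1', s_2'$ are four pairwise distinct facilities of $\light$. Since the sets $\ball{s}{o''} = \bs{s} \cap \bo{o''}$ are pairwise disjoint over $s \in \light$ (each client of $o''$ is served by a unique facility of $\soln$) and their union over all light $s$ is exactly $\bol{o''}$, the cardinalities satisfy $\sum_{s \in \light} \ball{s}{o''} = \mathcal{M}_{o''}$. Restricting this sum to the four distinct facilities above therefore yields $\ball{s_1}{o''} + \ball{s_2}{o''} + \ball{s_1'}{o''} + \ball{s_2'}{o''} \le \mathcal{M}_{o''}$.

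Adding the two domination inequalities from the first step, however, forces this same sum to strictly exceed $\mathcal{M}_{o''}$, a contradiction. Hence no common facility $o''$ can exist, and $\doo{o} \cap \doo{o'} = \phi$ as required.

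The only delicate point — and the step I expect to need the most care — is confirming that $s_1, s_2, s_1', s_2'$ are genuinely four distinct facilities, so that their loads on $o''$ can be added without double counting against the capacity bound $\mathcal{M}_{o''}$. This is precisely where Claim~\ref{c3} is essential: without the disjointness of $\po{o}$ and $\po{o'}$, the two pairs could share a facility, and the counting argument that produces the contradiction would break down.
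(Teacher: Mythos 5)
Your proof is correct and follows essentially the same route as the paper: contradiction, invoking Claim~\ref{c3} for the disjointness of $\po{o}$ and $\po{o'}$, and then observing that two disjoint sets cannot both dominate the same facility. The only difference is that you make explicit the counting argument (the four disjoint loads summing to at most $\mathcal{M}_{o''}$ while each pair's load strictly exceeds $\mathcal{M}_{o''}/2$) which the paper simply asserts as ``$o_1$ cannot be dominated by two disjoint sets of facilities.''
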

		 		
				\begin{proof}
					Suppose if possible let $o_1 \in \doo{o} \cap \doo{o'}$. This implies $o_1 \in \dom{\po{o}}$ and $o_1 \in \dom{\po{o'}}$. This is a contradiction as $\po{o} \cap \po{o'} = \phi$ from claim \ref{po_claim} and $o_1$ cannot be dominated dominated by two disjoint set of facilities.
				\end{proof}	
				
			\begin{claim}
			\label{dsp_claim}
			\label{c5}
				$\dpo \cap \op =\phi$.
			\end{claim}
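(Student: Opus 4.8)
The plan is to argue by contradiction, exploiting that the two facilities in $\po{o}$ are \emph{light} and hence have bounded total load. Suppose, to the contrary, that some $o^* \in \op$ also lies in $\dpo$. By the definition of $\dpo$ there is an $o \in \op$ with $o^* \in \dop{o} = \doo{o} \setminus \{o\}$; in particular $o^* \neq o$, and the pair $\po{o} = \{s_1, s_2\}$ dominates $o^*$, i.e. $\ball{s_1}{o^*} + \ball{s_2}{o^*} > \mathcal{M}_{o^*}/2$.

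First I would record what membership in $\op$ buys for $o^*$. Since $o^*$ is covered by exactly two light facilities $\po{o^*}=\{s_1',s_2'\}$, each contributing more than $\frac{2}{5}\capacity$ clients to the disjoint union $\bol{o^*}$, we get $\mathcal{M}_{o^*} = \abs{\bol{o^*}} > \frac{4}{5}\capacity$, whence $\mathcal{M}_{o^*}/2 > \frac{2}{5}\capacity$. Combined with the domination inequality this yields $\ball{s_1}{o^*} + \ball{s_2}{o^*} > \frac{2}{5}\capacity$. (This is precisely the place where the hypothesis $o^* \in \op$ is essential: domination alone gives only $>\mathcal{M}_{o^*}/2$, which could be negligible without the lower bound on $\mathcal{M}_{o^*}$.)

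Next I would use that $\po{o}=\{s_1,s_2\}$ covers $o$, so $\ball{s_1}{o} > \frac{2}{5}\capacity$ and $\ball{s_2}{o} > \frac{2}{5}\capacity$. Since $o \neq o^*$, for each $i$ the sets $\ball{s_i}{o}$ and $\ball{s_i}{o^*}$ are disjoint subsets of $\bs{s_i}$, so adding the loads of $s_1$ and $s_2$ gives
\[
\bs{s_1} + \bs{s_2} \;\geq\; \big(\ball{s_1}{o} + \ball{s_2}{o}\big) + \big(\ball{s_1}{o^*} + \ball{s_2}{o^*}\big) \;>\; \frac{4}{5}\capacity + \frac{2}{5}\capacity = \frac{6}{5}\capacity.
\]
On the other hand $s_1, s_2 \in \light$ forces $\bs{s_1} \leq \frac{3}{5}\capacity$ and $\bs{s_2} \leq \frac{3}{5}\capacity$, hence $\bs{s_1} + \bs{s_2} \leq \frac{6}{5}\capacity$, contradicting the strict inequality above. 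Therefore no $o^* \in \op$ can lie in $\dpo$, establishing $\dpo \cap \op = \phi$.

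This is essentially a counting argument and I do not anticipate a genuine obstacle. The only points demanding care are the disjointness that justifies additivity of the client counts — which is exactly why I need $o \neq o^*$, guaranteed by the ``$\setminus\{o\}$'' in the definition of $\dop{o}$ — and correctly extracting $\mathcal{M}_{o^*} > \frac{4}{5}\capacity$ from $o^* \in \op$ rather than settling for the weaker $\mathcal{M}_{o^*}/2 > \frac{2}{5}\capacity$ that domination alone provides.
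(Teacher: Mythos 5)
Your proof is correct, but it takes a genuinely different route from the paper's. The paper argues at the level of domination: if $o \in \dpo \cap \op$, then its own covering pair $\po{o}$ dominates $o$ (asserted without computation), and since $o \notin \dop{o}$ there is some $o' \in \op$, $o' \neq o$, with $o \in \dom{\po{o'}}$; as $\po{o} \cap \po{o'} = \phi$ by Claim~\ref{po_claim}, the facility $o$ would be dominated by two disjoint sets of facilities, each claiming strictly more than half of $\bol{o}$ --- impossible. You never invoke Claim~\ref{po_claim} or double domination; instead you run a load count on the single dominating pair $\po{o} = \{s_1, s_2\}$: covering $o$ costs it more than $\frac{4}{5}\capacity$ clients, dominating $o^*$ costs it more than $\mathcal{M}_{o^*}/2 > \frac{2}{5}\capacity$ further clients (this is where you correctly use $o^* \in \op$ to get $\mathcal{M}_{o^*} > \frac{4}{5}\capacity$, and $o \neq o^*$ for disjointness of the two client sets inside each $\bs{s_i}$), and a total exceeding $\frac{6}{5}\capacity$ is strictly more than two light facilities can serve. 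Each approach has its merits: the paper's proof reuses the structural machinery (disjointness of pairs, uniqueness of dominators) and is the same template as the proof of Claim~\ref{do_claim}, but it silently relies on the step ``covering by two light facilities implies domination,'' which in turn needs $\mathcal{M}_o \leq \capacity$; your proof is self-contained, independent of Claim~\ref{po_claim}, and makes the quantitative content explicit, at the price of being tied to the specific thresholds $\frac{2}{5}\capacity$ and $\frac{3}{5}\capacity$ rather than to the abstract majority property of domination.
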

			\begin{proof}
				Suppose if possible let $o \in \dpo \cap \op$. As $o \in \op$, we have $o \in \dom{\po{o}}$. By the definition we have $o \notin \dop{o}$ thus $o \in \dop{o'}$ for some $o' \neq o, o' \in \op$. This implies $o \in \dom{\po{o'}}$ which is a contradiction as $\po{o} \cap \po{o'} = \phi$ using claim \ref{po_claim}.
		\end{proof}				 		
		
				We consider at-most $k$ swaps, satisfying the following properties.
				\begin{enumerate}
					\item Each $o \in \opt$ is considered in atleast one swap and at most three swaps.
									
					\item If $s \in \heavy$, $s$ is not considered in any swap operation.
							
					\item Each $s \in \light$ is considered in at most three swaps.				
				
					\item If $\swapone{s}{o}$ is considered then $\forall o' \neq o$; $o' \notin \dom{s}$ and  $\forall o' \in \op: o' \neq o$; $s \notin \po{o'}$.
					
					\item If $\swaptwo{s_1}{s_2}{o_1}{o_2}$ is considered then $\forall o' \neq o_1, o_2$;  $o' \notin \domtwo{s_1}{s_2}$ and 
					$\forall o' \in \op: o' \neq o_1, o_2$; $s_1, s_2 \notin \po{o'}$.
					
				\end{enumerate}

			Let $\sw \subseteq \light$ and $\ow \subseteq \opt$ denote the set of facilities that have participated in the swaps at any point of time. Initially $\sw = \ow = \phi$. While considering the facilities, we also maintain the sets $\solnhat \subseteq \light$ and $\opthat\subseteq \opt$; initially $\solnhat = \opthat = \phi$. 
			The facilities in $\solnhat$ will never participate in any swap. Facilities in $\opthat$ correspond to the facilities in $\solnhat$ in some way which will become clear when we define the swaps.
			We also maintain a set of triplets denoted by $\bag$ (will be defined shortly) and two sets $\sB$ and $\oB$ corresponding to $\bag$. All the three sets are empty initially. Throughout we maintain that $\sw, \sB, \solnhat$ are pairwise disjoint and $\ow, \oB, \opthat$ are pairwise disjoint.

			For $o_1 \in \op$ with $\po{o_1} = \{s_1, s_2\}$.
			\begin{enumerate}
				\item If $\abs{\doo{o_1}} = 1$ then $\doo{o_1} = \{o_1\}$. In this case we call $\po{o_1}$ a {\em nice pair}.
				Set $\bag = \bag \cup \{ < s_1, s_2, o_1 > \} $,
				$\oB = \oB \cup \doo{o_1}$, $\sB = \sB \cup \po{o_1}$. 
	
				\item If $\abs{\doo{o_1}} = 2$, let $\doo{o_1} = \{o_1, o_2\}$ %such that $o_2 \notin \op$. 
				%Note that $o_2 \notin \op$ as $ \forall o' \neq o_1; o' \notin \svs{{s_1},{s_2}}$ by the property (\ref{op_prop2}) of set $\op$.
				In this case we call $\po{o_1}$ a { \em good pair} and consider $\swaptwopo{\po{o_1}}{\doo{o_1}}$ which is nothing but
				$\swaptwo{s_1}{s_2}{o_1}{o_2}$.
				Set $\ow = \ow \cup \doo{o_1}$, $\sw = \sw \cup \po{o_1}$. 
				
				\item If $\abs{\doo{o_1}} > 2$ then we call $\po{o_1}$ a {\em bad pair}.
				Set $\opthat = \opthat \cup \doo{o_1}$, 
				$\solnhat = \solnhat \cup \po{o_1}$. That is, put the bad pairs in $\solnhat$ and the facilities dominated by them in $\opthat$.
				%As mentioned earlier $ \forall o' \in \domtwo{s_1}{s_2} \setminus \{o_1\}; \ o' \notin \op$.
				Note that the cardinality of $\solnhat$ increased by $2$ while cardinality of $\opthat$ increased by at least $3$.	\label{32case}		
			\end{enumerate}				 
				 				
			Figure \ref{fig1}(a) shows the partitions $\light$ and $\opt$ at this time. Let $\solnhattwo = \sw \cup \sB \cup \solnhat$ and $ \opthattwo = \ow \cup \oB \cup \opthat$. Note that $\solnhattwo=\ssp$ and $\opthattwo=\op \cup \dpo$.
			Also, clearly $\abs{\sw} = \abs{\ow}$, $\abs{\sB} = 2\abs{\oB}$ as for every facility added to $\oB$, two facilities are added to $\sB$. 
			and $3| {\solnhat}|\leq 2 |{\opthat}|$. The last claim follows as for every two facilities added to $\solnhat$, at least three facilities are added to $\opthat$.
			Next, we consider the facilities in $\opt \setminus \opthattwo$ and $\light \setminus \solnhattwo$.
			We say that a facility $s \in \light \setminus \solnhattwo$ is {\em good} if $\abs{\dom{s}} = 1$, {\em bad} if $\abs{\dom{s}} > 1$, else {\em nice} (i.e. $\abs{\dom{s}} = 0$). Let $\good, \bad$, and $\nice$ denote the set of good, bad and nice facilities respectively and, $\opt \setminus \opthattwo$ is partitioned into $\og, \ob$ and $\on$. Let $\og$ denote the set of facilities in $\opt \setminus \opthattwo$ captured by good facilities. Let $\ob$ denote the set of facilities in $\opt \setminus \opthattwo$ captured by bad facilities, and let $\on$ denote the set of facilities in $\opt \setminus \opthattwo$ not captured by any facility in $\light \setminus \solnhattwo$. Figure \ref{fig0}(b) shows the relationship between $\good, \og$ and $\bad, \ob$.	
		
				\begin{figure}[h!]
				\begin{center}
					\includegraphics[width=\columnwidth]{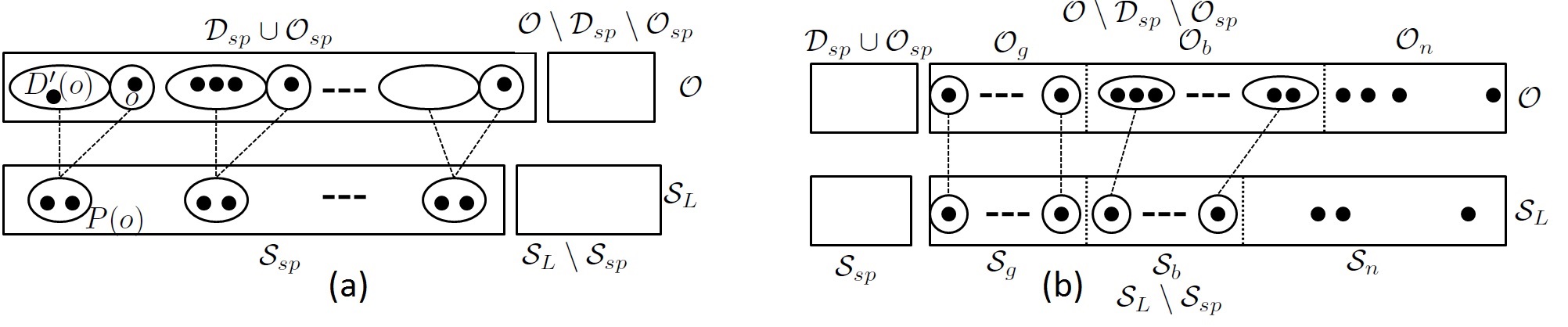}
					\caption{(a) Relationship between the partitions of $\ssp$ and $\dpo \cup \op$. (b)Relationship between the partitions of $\good$ and of $\og$, $\bad$.}
					% and $\ob$,$\nice$ and $\on$ .}
					\label{fig0}
				\end{center}
				\end{figure}	
		\begin{figure}[h!]
		\begin{center}
			\includegraphics[width=0.8\columnwidth]{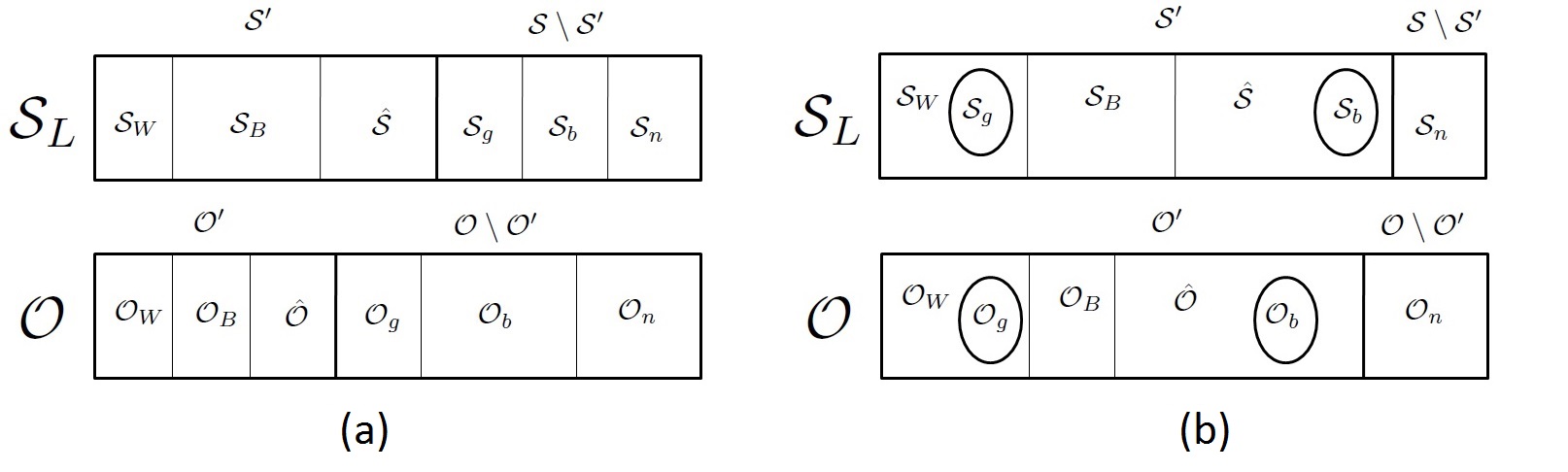}
			\caption{Partitions of $\light$ and $\opt$.}
			\label{fig1}
		\end{center}
		\end{figure}		
		
\begin{enumerate}
	\item For every $s \in \nice$ ($\abs{\dom{s}} = 0$): So nothing.
	\item For every $s \in \good$ ($ \abs{\dom{s}} = 1$): Perform $\swapone{s}{\dom{s}}$. %, where $s: \dom{s} =  o$ . 
	Update $\sw, \ow, \opthattwo$ and $\solnhattwo$ as $\sw = \sw \cup \good, \ow = \ow \cup \og, \opthattwo = \opthattwo \cup \ow$, $\solnhattwo = \solnhattwo \cup \sw$ in this order.
	
	\item For every $s \in \bad$ ($\abs{\dom{s}} = 0$): Set  $\solnhat = \solnhat \cup \{s\}, \opthat = \opthat \cup \dom{s}, \opthattwo = \opthattwo \cup \opthat$, $\solnhattwo = \solnhattwo \cup \solnhat$ in this order.
	That is, put the bad facility in $\solnhat$ and the facilities dominated by it in $\opthat$.
	%As mentioned earlier $ \forall o' \in \domtwo{s_1}{s_2} \setminus \{o_1\}; \ o' \notin \op$.
	%Note that 
	The cardinality of $\solnhat$ increased by $1$ while the cardinality of $\opthat$ increased by at least $2$.
\end{enumerate}

			New partitions are shown in Figure \ref{fig1}(b).
			%Figures \ref{fig0}(b) and \ref{fig1}(b). 
			Let $\solnbar$ be the set of facilities in $\light$ that have not participated in any swap. Then such a facility is either a nice facility, a bad facility,  is in a nice pair or in a bad pair. 
			Similarly, let $\optbar$ be the set of facilities in $\opt$ that have not participated in the above swaps.
			Then, $\oB$ is the set of facilities in $\optbar$ that  are in a triplet. Let $T = \optbar \setminus \oB$.
			Then facilities in $T$ are either dominated by a bad facility, by a bad pair, or are not dominated by any facility or a pair.$\ie$, $T = \opthat \cup \on$. Let $\ell$ be the number of such facilities $\ie$, $\ell = \abs{T}$. Next claim shows that there are at least $\ell/3$ nice facilities and nice pairs taken together.
		
			\begin{claim} \label{numsuff}
				$\abs{\bag} + \abs{\nice} \geq \frac{1}{3} ( \abs{\opthat} + \abs{\on})$
			\end{claim}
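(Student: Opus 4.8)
\emph{Proof plan.} The plan is to set up a single global cardinality balance between the light facilities of $\soln$ and the facilities of $\opt$, and then to charge the ``wasteful'' groups (the bad facilities and bad pairs) against $\opthat$ using the ratio that has already been recorded during the construction. The whole argument is a counting argument; no geometry or cost is involved.

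First I would record that, at the end of the construction, the two sides are each split into four matched groups: $\light = \sw \cup \sB \cup \solnhat \cup \nice$ and $\opt = \ow \cup \oB \cup \opthat \cup \on$, all unions disjoint. This is pure bookkeeping: every $o \in \op$, together with the facilities $\doo{o}$ dominated by its special pair, is routed to exactly one of $\ow$ (good pair), $\oB$ (nice pair, where $\doo{o}=\{o\}$), or $\opthat$ (bad pair); every facility of $\opt \setminus \opthattwo$ is routed to $\og \subseteq \ow$, to $\ob \subseteq \opthat$, or to $\on$; and the light side is split symmetrically by the good/bad/nice classification.

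Next I would read off the equalities tying the matched groups together. Each good pair contributes two facilities to $\sw$ and two to $\ow$, and each good facility contributes one to each, so $\abs{\sw} = \abs{\ow}$; each nice pair contributes two facilities to $\sB$, one to $\oB$, and exactly one triplet, so $\abs{\sB} = 2\abs{\oB}$ and $\abs{\oB} = \abs{\bag}$. Combining these with the feasibility fact $\abs{\light} \geq \abs{\opt}$ — which holds because opening $\tfrac{8}{3}k$ facilities guarantees more than $k$ light facilities while $\abs{\opt}\leq k$ — and cancelling $\abs{\sw}=\abs{\ow}$, then substituting $\abs{\sB}=2\abs{\oB}=2\abs{\bag}$, yields
\[
\abs{\bag} + \abs{\solnhat} + \abs{\nice} \;\geq\; \abs{\opthat} + \abs{\on}.
\]

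Finally I would eliminate the unwanted $\abs{\solnhat}$ term using the relation $3\abs{\solnhat}\leq 2\abs{\opthat}$ already established during the construction (each bad pair adds $2$ to $\solnhat$ and at least $3$ to $\opthat$, and each bad facility adds $1$ to $\solnhat$ and at least $2$ to $\opthat$, so the ratio is preserved throughout). Since $\abs{\on}\geq 0$ this gives $\abs{\solnhat} \leq \tfrac{2}{3}\bigl(\abs{\opthat}+\abs{\on}\bigr)$, and plugging it into the displayed inequality leaves $\abs{\bag}+\abs{\nice}\geq \tfrac{1}{3}\bigl(\abs{\opthat}+\abs{\on}\bigr)$, which is the claim. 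The only delicate point — and the step I expect to be the real obstacle — is verifying that the four-way partitions are genuinely exhaustive and disjoint, so that $\abs{\sw}=\abs{\ow}$ and $\abs{\sB}=2\abs{\oB}$ hold with exact equality; once that accounting is pinned down, the inequality is a one-line combination.
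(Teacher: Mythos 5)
Your proposal is correct and follows essentially the same route as the paper's own proof: the same four-way disjoint decompositions $\light = \sw \cup \sB \cup \solnhat \cup \nice$ and $\opt = \ow \cup \oB \cup \opthat \cup \on$, the same matching equalities $\abs{\sw}=\abs{\ow}$, $\abs{\sB}=2\abs{\oB}=2\abs{\bag}$, the feasibility bound $\abs{\light} \geq \abs{\opt}$, and the construction-time ratio $3\abs{\solnhat} \leq 2\abs{\opthat}$ combined at the end. The only cosmetic difference is that you relax $\abs{\solnhat} \leq \tfrac{2}{3}\abs{\opthat}$ to $\abs{\solnhat} \leq \tfrac{2}{3}(\abs{\opthat}+\abs{\on})$ before substituting, whereas the paper substitutes directly and then weakens $\tfrac{1}{3}\abs{\opthat}+\abs{\on}$ to $\tfrac{1}{3}(\abs{\opthat}+\abs{\on})$; both steps are trivially equivalent.
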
	
		
			\begin{proof}
				While handling $\ob$, for every facility added in $\solnhat$  atleast $2$ facilities are added in $\opthat$ and while handling $\op$,  atleast $3$ facilities are added in $\opthat$ for every $2$ facilities added in $\solnhat$.  Thus we have $3 \cdot \abs{\solnhat} \leq 2 \cdot \abs{\opthat} 
				\implies \abs{\solnhat} \leq \frac{2}{3} \cdot\abs{\opthat}$
				
	%			\begin{equation}
	%			3.\abs{\solnhat} \leq 2.\abs{\opthat} 
	%			\implies \abs{\solnhat} \leq \frac{2}{3}.\abs{\opthat}
	%			\end{equation}
				
				Also $\abs{\opt} = k \leq \abs{\light}$. 
				$\opt = \opthattwo \cup \on$ and $\opthattwo \cap \on = \phi$
				
				Thus $\opt = \ow \cup \oB \cup \opthat \cup \on$
				
				Similarly $\light = \solnhattwo \cup \nice$ and $\solnhattwo \cap \nice = \phi$
				
				Thus $\light = \sw \cup \sB \cup \solnhat \cup \nice$
				
				Also $\abs{\ow} = \abs{\sw}$, $\abs{\oB} = \abs{\bag}$, $\abs{\sw} =2\abs{\bag}$ and $\abs{\solnhat} \leq \frac{2}{3}.\abs{\opthat}$ 
				
				Thus we get
				
				$\abs{\sw} + \abs{\solnhat} + 2\abs{\bag} + \abs{\nice} \geq \abs{\ow} + \abs{\bag} + \abs{\opthat} + \abs{\on}$
				
				$\abs{\bag} + \abs{\nice} \geq  \abs{\opthat} - \abs{\solnhat} + \abs{\on} \geq \frac{1}{3} \abs{\opthat} + \abs{\on}$
				
				$\abs{\bag} + \abs{\nice} \geq \frac{1}{3}( \abs{\opthat} + \abs{\on})$
				
				%$\implies \frac{1}{3}(\abs{\opthat} + \abs{\on}) \leq \frac{1}{3}\abs{\opthat} + \abs{\on} \leq \abs{\bag} + \abs{\nice}$

			\end{proof}

			Next, consider the following $\ell$ swaps in which the facilities in $\optbar$ are swapped with nice facilities or nice pairs in $\light$ in a way that each nice facility or a facility in a nice pair is considered in at most $3$ swaps and each facility in $\oB$ is also considered in at most $3$ swaps.
		\begin{enumerate}	
			\item Repeat until $\abs{T} < 3$. Pick $o_1, o_2, o_3 \in T$. 
			\begin{enumerate}
				\item If $\nice \neq \phi$. Pick a facility $s_1 \in \nice$; perform\\ $\swapone{s_1}{o_1}$, $\swapone{s_1}{o_2}$, $\swapone{s_1}{o_3}$.
				Set $\nice = \nice \setminus \{s_1\}$.
				
				\item Else, pick a triplet $<s_1, s_2, o> \in \bag$, and perform\\
				$\swaptwo{s_1}{s_2}{o}{o_1}$, 
				$\swaptwo{s_1}{s_2}{o}{o_2}$, 
				$\swaptwo{s_1}{s_2}{o}{o_3}$.
				\item Set $ T = T \setminus \{o_1, o_2, o_3\}$
			\end{enumerate} 
		\item If $\abs{T} > 0$, either there must be a facility $s_1 \in \nice$ or a triplet $<s_1, s_2, o> \in \bag$; accordingly perform swap or double-swap with the facilities in $T$ in the same manner as described in step 1.
			\end{enumerate} 
		
			The swaps are summarized in Figure \ref{fig2} and Figure \ref{fig3}.
				\begin{figure}[h!]
					\begin{center}
						\includegraphics[width=0.6\columnwidth]{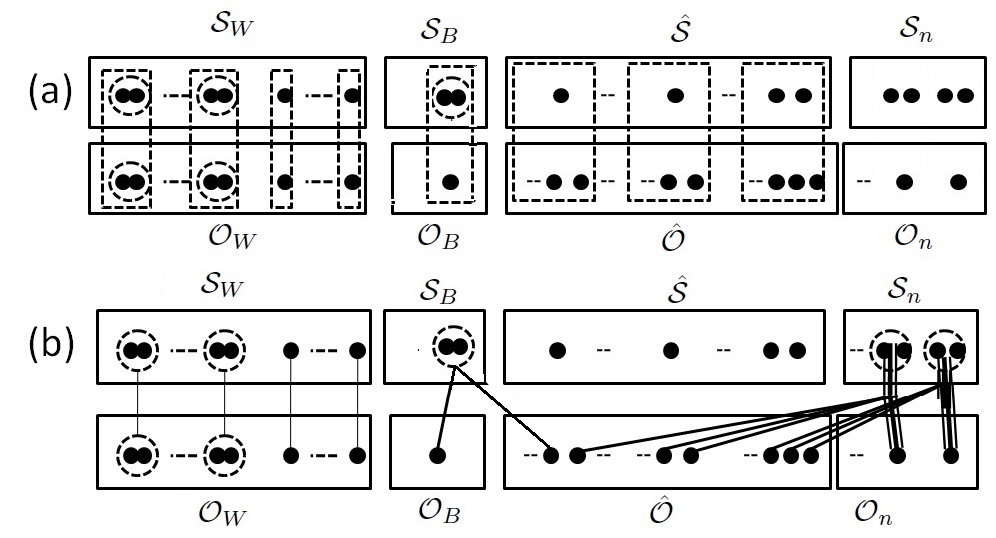}
						\caption{(a) Partitions of $\light$ and $\opt$ in terms of $\sw,\sB,\solnhat,\nice$ and $\ow, \ow, \opthat, \on$ respectively.
						(b) Swaps}
						\label{fig3}
					\end{center}
				\end{figure}
		
				\begin{figure}[h!]
				\begin{center}	
		
				\begin{tcolorbox}[colback=white] %\label{box1}
					\begin{enumerate}
						
						\item 
								For all $o_1 \in \op$ {\bf do} let $\po{o_1} = \{s_1, s_2\}$.
								\begin{enumerate}
									\item If $\abs{\doo{o_1}} = 1$ then $\doo{o_1} = \{o_1\}$.  $\po{o_1}$ is a nice pair.
									Set $\bag = \bag \cup \{ < s_1, s_2, o_1 > \} $,
									$\oB = \oB \cup \doo{o_1}$, $\sB = \sB \cup \po{o_1}$. 
						
									\item If $\abs{\doo{o_1}} = 2$, let $\doo{o_1} = \{o_1, o_2\}$ %such that $o_2 \notin \op$. 
									%Note that $o_2 \notin \op$ as $ \forall o' \neq o_1; o' \notin \svs{{s_1},{s_2}}$ by the property (\ref{op_prop2}) of set $\op$.
								$\po{o_1}$ is a good pair. Consider $\swaptwopo{\po{o_1}}{\doo{o_1}}$. %Which is nothing but
								%	$\swaptwo{s_1}{s_2}{o_1}{o_2}$.
									Set $\ow = \ow \cup \doo{o_1}$, $\sw = \sw \cup \po{o_1}$. 
									
									\item If $\abs{\doo{o_1}} > 2$ then 
									set $\opthat = \opthat \cup \doo{o_1}$, 
									$\solnhat = \solnhat \cup \po{o_1}$.
								\end{enumerate}

						\item For every $o \in \og$, consider $\swapone{s}{o}$, where $s: \dom{s} = o$. Update $\sw, \ow, \opthattwo$ and $\solnhattwo$ as $\sw = \sw \cup \good, \ow = \ow \cup \og, \opthattwo = \opthattwo \cup \ow$, $\solnhattwo = \solnhattwo \cup \sw$ in this order.
						
						\item Next consider $\ob$ and $\bad$. 
						Update $\solnhat, \opthat, \opthattwo$ and $\solnhattwo$ as $\solnhat = \solnhat \cup \bad, \opthat = \opthat \cup \ob, \opthattwo = \opthattwo \cup \opthat$, $\solnhattwo = \solnhattwo \cup \solnhat$ in this order.
						
						\item Let $T = \opthat \cup \on$.  Consider the following swaps:
						\begin{enumerate}	
							\item Repeat until $\abs{T} < 3$. Pick $o_1, o_2, o_3 \in T$. \label{step1}
							\begin{enumerate}
								\item If $\nice \neq \phi$. Pick a facility $s_1 \in \nice$; perform\\ $\swapone{s_1}{o_1}$, $\swapone{s_1}{o_2}$, $\swapone{s_1}{o_3}$
								Set $\nice = \nice \setminus \{s_1\}$.
								
								\item Else, pick a triplet $<s_1, s_2, o> \in \bag$, and perform\\
								$\swaptwo{s_1}{s_2}{o}{o_1}$, 
								$\swaptwo{s_1}{s_2}{o}{o_2}$, 
								$\swaptwo{s_1}{s_2}{o}{o_3}$.
								\item Set $ T = T \setminus \{o_1, o_2, o_3\}$
							\end{enumerate} 
							\item If $\abs{T} > 0$, either there must be a facility $s_1 \in \nice$ or a triplet $<s_1, s_2, o> \in \bag$; accordingly perform swap or double-swap with the facilities in $T$ in the same manner as described in step~\ref{step1}.
						\end{enumerate}

					\end{enumerate}
				\end{tcolorbox}
					\caption{Summary of the swaps}
							\label{fig2}
						\end{center}
						\end{figure}

	\subsection{Analysis: Bounding the Cost}
	\label{cost-bound}
		Now we bound the cost of these swaps. 
		Whenever we consider a swap of form $\swaptwo{s_1}{s_2}{o_1}{o_2}$, the mapping $\tau$ as defined in claim~\ref{suniono} cannot be used to reassign the clients of $s_1$ and $s_2$ as it is possible that 
		%for some $\tilde{o} \in \opt$, 
		some client $j$ of $s_1$ is mapped to a client of $s_2$ or vice versa.
		%some client $j \in \ball{s_1}{\tilde{o}}$ is getting mapped to some client $j' \in \ball{s_2}{\tilde{o}}$ or vice versa.
		To address this, we define another mapping $\tau'$ in a similar way as in claim (\ref{claim:strongclaim}) considering $s_1$ and $s_2$ as a single facility. 
			Let $DS = \{\{s_1,s_2\}: \swaptwo{s_1}{s_2}{.}{.}$
			was~performed $\}$.
		$\tau'$ satisfy the following claim
		
		\begin{claim}
		\label{claim:strongclaim1}
		\begin{enumerate}
			\item For $s \in \light$ and $o\in \opt | o \notin \dom{s}$
					\begin{enumerate}
						\item \label{prop01}
						$\tau'(\ball{s}{o}) \cap \ball{s}{o} = \phi$.
						
						\item \label{prop02}
						If $ o \notin \op $ then $\abs{\{j \in \ball{s}{o} : \tau'(j) \in \ball{s'}{o}\}} \le \frac{2}{5}\capacity,~\forall s' \neq s$.
					\end{enumerate}	
			\item For {$\{s_1,s_2\} \in DS$} and $o\in \opt | o \notin \domtwo{s_1}{s_2}$
			\begin{enumerate}
			\item \label{prop11}
				$\tau'(\ball{\{s_1, s_2\}}{o}) \cap \ball{\{s_1, s_2\}}{o} = \phi$.
						
			\item \label{prop21}
				If $ o \notin \op $ then $\abs{\{j \in \ball{\{s_1, s_2\}}{o} : \tau'(j) \in \ball{s'}{o}\}} \le \frac{2}{5}\capacity,~\forall s' \neq s_1, s_2$.
			\end{enumerate}
		\end{enumerate}		
		\end{claim}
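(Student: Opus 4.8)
The plan is to define $\tau'$ by imitating the construction of $\tau$ in Claim~\ref{claim:strongclaim}, but merging the two facilities of every double-swap pair into a single block before performing the cyclic shift. Concretely, for each $o \in \opt$ I would order the clients of $\bol{o}$ as $j_0, j_1, \dots, j_{\mathcal{M}_o - 1}$ so that (i) for every single facility $s \in \light$ the clients of $\ball{s}{o}$ occupy a contiguous range, and (ii) for every pair $\{s_1, s_2\} \in DS$ the clients of $\ball{\{s_1,s_2\}}{o} = \ball{s_1}{o} \cup \ball{s_2}{o}$ occupy one contiguous range, the two sub-blocks being placed adjacently. Then I set $\tau'(j_p) = j_q$ with $q = (p + \floor{\mathcal{M}_o/2}) \bmod \mathcal{M}_o$, exactly as for $\tau$. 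These two requirements are consistent, since each single block remains contiguous inside a merged block. The purpose of grouping a double-swap pair is that a client of $s_1$ is never mapped back into $\ball{s_1}{o}\cup\ball{s_2}{o}$, which is what makes $\tau'$ usable for reassigning the clients of $s_1$ and $s_2$ simultaneously.

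Parts \ref{prop01} and \ref{prop11} I would prove by the same half-shift argument used for Claim~\ref{claim:strongclaim}.\ref{prop1}, applied to the relevant contiguous block $B$ (with $B = \ball{s}{o}$ in part~1 and $B = \ball{\{s_1,s_2\}}{o}$ in part~2). The hypothesis $o \notin \dom{s}$ gives $\ball{s}{o} \le \mathcal{M}_o/2$, and likewise $o \notin \domtwo{s_1}{s_2}$ gives $\ball{\{s_1,s_2\}}{o} = \ball{s_1}{o} + \ball{s_2}{o} \le \mathcal{M}_o/2$ by the definition of domination of a set. Since $B$ is contiguous of length at most $\mathcal{M}_o/2$ and $\tau'$ shifts cyclically by $\floor{\mathcal{M}_o/2}$, no element of $B$ can be sent into $B$: if $j_p$ and $j_q = \tau'(j_p)$ both lay in $B$, then reasoning exactly as in Claim~\ref{claim:strongclaim} forces $B$ to contain strictly more than $\mathcal{M}_o/2$ consecutive clients, contradicting $|B| \le \mathcal{M}_o/2$.

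For part \ref{prop02} the argument is identical to Claim~\ref{claim:strongclaim}.\ref{prop2}: the set in question is a subset of $\ball{s}{o}$ while its $\tau'$-image is a subset of $\ball{s'}{o}$, so its cardinality is at most $\min(\ball{s}{o}, \ball{s'}{o})$; since $o \notin \op$ at most one facility covers $o$, so at least one of $s,s'$ has its block bounded by $\frac{2}{5}\capacity$. The genuinely new case, and the one I expect to be the main obstacle, is part \ref{prop21}, because when the target $s'$ covers $o$ the factor $\ball{s'}{o}$ exceeds $\frac{2}{5}\capacity$ and the $\min$ bound is useless. I would resolve it by exploiting that every pair in $DS$ is a special-cover pair: if $\{s_1, s_2\} \in DS$ then $\{s_1,s_2\} = \po{o_1}$ for some $o_1 \in \op$, so $\vs{s_1} = \vs{s_2} = o_1$, i.e. $\ball{s_1}{o_1} > \frac{2}{5}\capacity$ and $\ball{s_2}{o_1} > \frac{2}{5}\capacity$. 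Since $s_1, s_2 \in \light$ satisfy $\bs{s_i} \le \frac{3}{5}\capacity$, each of them serves fewer than $\frac{1}{5}\capacity$ clients at every facility other than $o_1$. The hypothesis $o \notin \op$ forces $o \neq o_1$, whence $\ball{\{s_1,s_2\}}{o} = \ball{s_1}{o} + \ball{s_2}{o} < \frac{2}{5}\capacity$; as the set appearing in part~\ref{prop21} is contained in $\ball{\{s_1,s_2\}}{o}$, its size is below $\frac{2}{5}\capacity$ regardless of $s'$. This observation, that a special-cover pair is spread thin over all optimal facilities except the one it covers, is the crux: it is exactly what keeps the combined block small even though $s_1$ and $s_2$ are swapped out together.
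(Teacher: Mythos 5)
Your proposal is correct, and for most of the claim it is the paper's own proof: the construction of $\tau'$ (merge each double-swap pair into a single meta-facility whose clients occupy one contiguous block, then shift cyclically by $\floor{\mathcal{M}_o/2}$) is exactly what the paper does, and your half-shift arguments for parts (a) of both items, as well as your $\min$-based argument for part 1(b), coincide with the paper's. Where you genuinely diverge is part 2(b). The paper handles it by the same case analysis it used in Claim~\ref{claim:strongclaim}: if $\vstwo{s_1}{s_2} = o$ it asserts that $\abs{\ball{s'}{o}} \le \frac{2}{5}\capacity$ for every $s' \neq s_1, s_2$ on the grounds that $o \notin \op$ admits at most one coverer, and if $\vstwo{s_1}{s_2} \neq o$ it uses $\abs{\ball{\{s_1,s_2\}}{o}} \le \frac{2}{5}\capacity$. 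You instead invoke the provenance of $DS$: every double-swap performed involves either a good pair or the pair of a triplet in $\bag$, so any $\{s_1,s_2\} \in DS$ equals $\po{o_1}$ for some $o_1 \in \op$, meaning each $s_i$ individually covers $o_1$; lightness then gives $\abs{\ball{s_i}{o}} \le \abs{\bs{s_i}} - \abs{\ball{s_i}{o_1}} < \frac{3}{5}\capacity - \frac{2}{5}\capacity = \frac{1}{5}\capacity$ for every $o \neq o_1$, hence $\abs{\ball{\{s_1,s_2\}}{o}} < \frac{2}{5}\capacity$ whenever $o \notin \op$, which bounds the set in part 2(b) uniformly in $s'$. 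This difference buys something real. The paper's first case conflates set-coverage with individual coverage: $o \notin \op$ only restricts how many facilities cover $o$ \emph{individually}, so the pair covering $o$ as a set would not, by itself, rule out a third facility $s'$ covering $o$ individually, and the asserted bound on $\ball{s'}{o}$ would then fail. Your observation that a special-cover pair is spread thin over all optimal facilities other than the one it covers shows this problematic case is vacuous for pairs in $DS$, so your argument both subsumes the paper's case analysis and closes its loose step, at the modest cost of using one extra structural fact about which pairs actually appear in double-swaps.
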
		
		
			\begin{proof}
				$\tau'$ can be defined as follows. Consider $s_1$ and $s_2$ as a single meta-facility, and define the mapping just as earlier. That is no $j \in \ball{s_1}{\tilde{o}}$ get mapped to $j' \in \ball{s_2}{\tilde{o}}$ and vice versa. Note that this is possible to create such a $\tau'$ as $\{s_1, s_2\}$ together do not dominate $\tilde{o}$.
				
				For (\ref{prop11}), as $o \notin \domtwo{s_1}{s_2}$; same argument ensures that $\tau'(\ball{\{s_1, s_2\}}{o}) \cap \ball{\{s_1, s_2\}}{o} = \phi$ still hold true  
				For (\ref{prop21}), as $o \notin \op$, then at most one facility can cover $o$. If $\vstwo{s_1}{s_2} = o$ then for all $s' \neq s, \ball{s'}{o} \leq \frac{2}{5} \capacity$. And if $\vstwo{s_1}{s_2} \neq o$ then $\ball{\{s_1,s_2\}}{o} \leq \frac{2}{5} \capacity$. In either case the claim $\abs{\{j \in \ball{\{s_1,s_2\}}{o} : \tau'(j) \in \ball{s'}{o}\}} \le \frac{2}{5}\capacity,~\forall s' \neq s_1,s_2$ holds true.			
			\end{proof}		
			
			Set $\tau = \tau'$. For $\swapone{s_1}{o_1}$ reassignment is done as follows: for  $j \in \bo{o_1}$ assign $j$ to $o_1$ and for $j \in \bs{s_1} \setminus \bo{o_1}$, assign $j$ to $\sigmas{\tau(j)}$. The cost of the operation is 		
		
			For $j \in \bs{s_1} \setminus \bo{o_1}, \sigmas{\tau(j)} \ne \sigmas{j}$. Thus,  since $j$ was assigned to $s$ and not to $\sigmas{\tau(j)}$ (call it $s'$), $\dist{j}{s} \le \dist{j}{s'}$ (since $s'$, being a light facility, had sufficient room to accommodate $j$ ). Thus,  $\dist{j}{s} \leq \dist{j}{s'} \leq \dist{j}{o'} + \dist{o'}{\tau(j)} + \dist{\tau(j)}{s'}$ or $\sj{j} \leq (\oj{j} + \oj{\tau(j)} + \sj{\tau(j)})$. However, if $j \in \bs{s_1} \cap \bo{o_1}, \sigmas{\tau(j)}$ may be same as $\sigmas{j}$. $\sj{j} \leq (\oj{j} + \oj{\tau(j)} + \sj{\tau(j)})$ follows trivially in this case by triangle inequality.
			Thus we can write

			%Note that $(\oj{j} + \oj{\tau(j)} + \sj{\tau(j)} - \sj{j}) >0$. As $j$ was assigned to $s$ and not $\sigmas{\tau(j)}$ (lets call it $s'$) even when $s'$ being a light facility has sufficient capacity to accommodate $j$, this mean $\dist{j}{s} \leq \dist{j}{s'} \leq \dist{j}{o'} + \dist{o'}{\tau(j)} + \dist{\tau(j)}{s'}$ or $\sj{j} \leq (\oj{j} + \oj{\tau(j)} + \sj{\tau(j)})$. In case if $\sigmas{j} = \sigmas{\tau(j)}$, this the above claim holds true by triangle's inequality.
			%Thus we can write
			\begin{equation} \label{eq:2.2}
				\sumlimits{j \in \bo{o_1}}{}(\oj{j} - \sj{j}) +  \sumlimits{j \in \bs{s_1}}{} (\oj{j} + \oj{\tau(j)} + \sj{\tau(j)} - \sj{j}) > 0
			\end{equation}

			For $\swaptwo{s_1}{s_2}{o_1}{o_2}$, reassignment is defined as, assign $j \in \bo{o_1}$ to $o_1$, assign $j \in \bo{o_2}$ to $o_2$, and assign $j \in \bs{s_1} \cup \bs{s_2} \setminus \bo{o_1} \setminus \bo{o_2}$ to $\sigmas{\tau^*(j)}$. The cost of the operation is 
					
			\begin{multline} \label{eq:2.3}
				\sumlimits{j \in \bo{o_1} \cup \bo{o_2}}{}(\oj{j} - \sj{j}) +  \\
				\sumlimits{j \in \bs{s_1} \cup \bs{s_2} \setminus \{\bo{o_1} \cup \bo{o_2}\}}{} (\oj{j} + \oj{\tau(j)} + \sj{\tau(j)} - \sj{j}) > 0
			\end{multline}
			Also, by similar arguement as above, we have
			
			\begin{multline} \label{eq:2.4}
				\sumlimits{j \in \bo{o_1} \cup \bo{o_2}}{}(\oj{j} - \sj{j}) +  \\
				\sumlimits{j \in \bs{s_1} \cup \bs{s_2}}{} (\oj{j} + \oj{\tau(j)} + \sj{\tau(j)} - \sj{j}) > 0
			\end{multline}

			Now, a facility $ o \in \opt$ may be considered at max three times. Let $\opt_1, \opt_2, \opt_3$ denote the set of facilities in $\opt$ that are considered in $1,2$ and $3$ swaps respectively.A facility $s \in \light$ may be swapped out at most $3$ times. Thus,
			we can write
			\begin{multline} \label{eq:2.5.1}
				\sumlimits{o \in \opt_1}{}\sumlimits{j \in \bo{o}}{}(\oj{j} - \sj{j}) + 
				\sumlimits{o \in \opt_2}{}\sumlimits{j \in \bo{o}}{}2(\oj{j} - \sj{j}) +
				\sumlimits{o \in \opt_2}{}\sumlimits{j \in \bo{o}}{}3(\oj{j} - \sj{j}) +\\		
				\sumlimits{s \in \light}{}\sumlimits{j \in \bs{s}}{} 3.(\oj{j} + \oj{\tau(j)} + \sj{\tau(j)} - \sj{j}) > 0
			\end{multline}
			as $\oj{j} + \oj{\tau(j)} + \sj{\tau(j)} - \sj{j} > 0$ as argued above.
			
		\begin{multline} \label{eq:2.5.2}
		\sumlimits{o \in \opt_1}{}\sumlimits{j \in \bo{o}}{}(\sj{j}) + 
		\sumlimits{o \in \opt_2}{}\sumlimits{j \in \bo{o}}{}2(\sj{j}) +
		\sumlimits{o \in \opt_2}{}\sumlimits{j \in \bo{o}}{}3(\sj{j})		
		<	\\	
		\sumlimits{o \in \opt_1}{}\sumlimits{j \in \bo{o}}{}(\oj{j}) + 
		\sumlimits{o \in \opt_2}{}\sumlimits{j \in \bo{o}}{}2(\oj{j}) +
		\sumlimits{o \in \opt_2}{}\sumlimits{j \in \bo{o}}{}3(\oj{j}) +\\		
		\sumlimits{s \in \light}{}\sumlimits{j \in \bs{s}}{} 3.(\oj{j} + \oj{\tau(j)} + \sj{\tau(j)} - \sj{j}) 
		\end{multline}
		
		\begin{multline} \label{eq:2.5.3}
		\sumlimits{o \in \opt}{}\sumlimits{j \in \bo{o}}{}(\sj{j})		
		<
		\sumlimits{o \in \opt}{}\sumlimits{j \in \bo{o}}{}3(\oj{j}) +		
		\sumlimits{s \in \light}{}\sumlimits{j \in \bs{s}}{} 3.(\oj{j} + \oj{\tau(j)} + \sj{\tau(j)} - \sj{j}) 
		\end{multline}
		
		\begin{multline} \label{eq:2.5.4}
		\sumlimits{j \in \cli}{}(\sj{j})		
		<
		\sumlimits{j \in \cli}{}3(\oj{j}) +		
		\sumlimits{s \in \light}{}\sumlimits{j \in \bs{s}}{} 3.(\oj{j} + \oj{\tau(j)} + \sj{\tau(j)} - \sj{j}) 
		\end{multline}
		\begin{multline} \label{eq:2.5.5}
		\cost{\soln}		
		<
		3\cost{\opt} +		
		\sumlimits{s \in \light}{}\sumlimits{j \in \bs{s}}{} 3.(\oj{j} + \oj{\tau(j)} + \sj{\tau(j)} - \sj{j})
		\end{multline}
		%Now, every facility in $\opt$ is considered in exactly one swap, so first term gets added over all $o \in \opt$, while 
		%Whereas a facility $s \in \light$ gets swapped out at most $3$ times. So we get
		%\begin{multline} \label{eq:2.5}
		%	\sumlimits{o \in \opt}{}\sumlimits{j \in \bo{o}}{}(\oj{j} - \sj{j}) + 
		%	\sumlimits{s \in \light}{}\sumlimits{j \in \bs{s}}{} 3.(\oj{j} + \oj{\tau(j)} + \sj{\tau(j)} - \sj{j}) > 0
		%\end{multline}
		
		%Considering all the swaps, we will get
		%\begin{multline} \label{eq:1.3}
		%\sumlimits{j \in \cli}{}(\oj{j} - \sj{j})  
		%+ 
		%\sumlimits{j \in \cli}{}
		%3(\oj{j} + \oj{\tau(j)} + \sj{\tau(j)} - \sj{j}) 
		%> 0
		%\end{multline}
		%where first term gives $\cost{\opt} - \cost{\soln}$ %while second term gives $6.\cost{\opt}$. Thus we get $\cost{\opt}-\cost{\soln} + 6\cost{\opt} \geq 0$. Thus, $\cost{\soln} \leq 7\cost{\opt}$.
		
		And as $\tau$ is a $1-1$ and onto mapping, 
		$$\sumlimits{s \in \light}{}\sumlimits{j \in \bs{s}}{}\oj{j} =\sumlimits{s \in \light}{}\sumlimits{j \in \bs{s}}{}\oj{\tau(j)}$$ 
		and $$\sumlimits{s \in \light}{}\sumlimits{j \in \bs{s}}{}(\sj{\tau(j)} - \sj{j}) = 0$$. Thus,
		\begin{multline}
		\sumlimits{s \in \light}{}\sumlimits{j \in \bs{s}}{} 3.(\oj{j} + \oj{\tau(j)} + \sj{\tau(j)} - \sj{j}) = 2 \sumlimits{s \in \light}{}\sumlimits{j \in \bs{s}}{} 3.(\oj{j}) = 6\sumlimits{s \in \light}{}\sumlimits{j \in \bs{s}}{} (\oj{j}) \leq\\ 6\sumlimits{s \in \soln}{}\sumlimits{j \in \bs{s}}{} (\oj{j}) = 6 \cost{\opt}
		\end{multline}

		%And as $\tau$ is a $1-1$ and onto mapping, $\sumlimits{j \in \cli}{}\oj{j} =\sumlimits{j \in \cli}{}\oj{\tau(j)}$ and $\sumlimits{j \in \cli}{}(\sj{\tau(j)} - \sj{j}) = 0$. Thus, $3(\oj{j} + \oj{\tau(j)} + \sj{\tau(j)} - \sj{j}) = 6\cost{\opt}$. 
		Thus we have
		
		\begin{multline} \label{eq:2.5.6}
		\cost{\soln}		
		<
		3\cost{\opt} +		
		\sumlimits{s \in \light}{}\sumlimits{j \in \bs{s}}{} 3.(\oj{j} + \oj{\tau(j)} + \sj{\tau(j)} - \sj{j}) \\
		\leq 3\cost{\opt} + 6\cost{\opt} = 9\cost{\opt} 
		\end{multline}

			\section{$(3+ \epsilon, 8/3)$ algorithm using multi-swaps} \label{sec-CkM-algo2}
			In this section, we reduce the factor to $(3+ \epsilon)$ using multi-swap operation.
			%extend the idea to multi-swaps. We present an algorithm that where upto $p > 1$ facilities can be swapped simultaneously. 
			Let $p = 2p'+2$ for some integer $p'$. Let $p > 2$. The algorithm performs the following operation if it reduces the cost of the solution and it terminates otherwise.
			%The swap operation can be generalized as
			$$\swapmulti{A}{B}: \soln = \soln \setminus B \cup A; B \subseteq \soln, A \subseteq \fac, \abs{A} = \abs{B} \leq p$$
			The operation can be performed in $O(n^p)$ time. For a fixed $\epsilon > 0$ and $p = O(1/\epsilon)$, it runs in $O(n^{1/\epsilon})$ time.

		\subsection{Defining the swaps}
		
	Following swaps are considered. Partitions are formed and the facilities in $\op$ and $\ow$ are treated in the same manner as described in Section~\ref{sec-swaps}.	
		%
	%	Just as earlier, consider the partitioning of $\light$ into $\sw, \sB, \solnhat, \good, \bad, \nice$ such that $\sw \cup \sB \cup \solnhat \cup \good \cup \bad \cup \nice = \light$. Also $\opt$ is participated into $\ow, \oB, \opthat, \og, \ob, \on$ such that $\ow \cup \oB \cup \opthat \cup \og \cup \ob \cup \on = \opt$. 
	%	
	We write inequalities corresponding to different operations and take their weighted sums.
	%We take the weighted sum of inequalities that we obtain from different operations.
	 Inequalities corresponding to the swaps defined for facilities in $\op$ and $\ow$ are assigned weight $1$.
	 
		Let $\topt = \oB \cup \opthat \cup \ob \cup \on$ and 
		$\tsoln = \sB \cup \solnhat \cup \bad \cup \nice$. $\tsoln$ and $\topt$ are partitioned into $A_1, A_2, \ldots, A_r$ and $B_1, B_2, \ldots, B_r$ respectively using the Partition Algorithm. %\ref{algo1} 
		such that the partitions satisfy the following properties:	
		\begin{enumerate}
			\item For $1 \leq i \leq (r-1)$, we have $\abs{A_i} = \abs{B_i}, B_i = \dom{A_i}$ and $\abs{A_r} = \abs{B_r}$.
			
			\item For $1 \leq i \leq (r-1)$, the set $A_i$ has exactly one facility $s$ from $\bad$ or exactly one bad pair $<s_1, s_2>$ from $\solnhat$. 
			%such that $\{s_1, s_2\} = \svsi{o}$ for some $o \in \op$....Aditya ...Can you please see this? Is it correct?..shouldn't it have been like bad pair? \textcolor{red}{Check this}
			
			\item The set $A_r$ contains  facilities only from $\sB \cup \nice$.
		\end{enumerate} 
	%	
		% $\abs{A_i} = \abs{B_i}$ and each $A_i$ contains either exactly one facility from $\bad$ or exactly one pair $<s_1, s_2>$ such that $<s_1, s_2, .> \in \bag$.
		
	\begin{algorithm}
	%	\label{algo1}
	\footnotesize
		\begin{algorithmic}[1]
			%\label{algo1}	
			%\Procedure{Meta-cluster}{$\mathcal{T}$}
			\STATE $i=0$.
			\WHILE {$\exists$ a facility in $\solnhat \cup \bad$}
			\STATE $i = i+1$
			\STATE $A_i = \{s\}$ where $s \in \bad$ or $A_i = \{s_1, s_2\}$ where 
			%$s_1, s_2 \in \solnhat$ such that 
			$\{s_1, s_2\}$ is a bad pair from $\solnhat$. 
			% and $\abs{\domtwo{s_1}{s_2}} >2$. \textcolor{red}{check this}
			\STATE $B_i = \dom{A_i}$ 
			
			\WHILE {$\abs{A_i} \neq \abs{B_i}$}
			\STATE $A_i = A_i \cup \{g\}$ where $g \in \tsoln \cap \{\sB \cup \nice \} \setminus A_i$.
			\STATE $B_i = \dom{A_i}$.
			\ENDWHILE
			
			\STATE $\topt = \topt \setminus B_i$, $\tsoln = \tsoln \setminus A_i$. 
			\ENDWHILE 
			\STATE $A_r = \tsoln$, $B_r = \topt$.
			%\EndProcedure
		\end{algorithmic}
		\label{alg1}
		\caption{Partition Algorithm}
	\end{algorithm}

	%	\textbf{procedure} partition:
	%	\begin{enumerate}
	%	\item $i=0$
	%	\item \textbf{while} $\exists$ a facility in $\solnhat \cup \bad$ \textbf{do}
	%	\begin{enumerate}
	%		\item $i = i+1$
	%		\item $A_i = \{s\}$ where $s \in \bad$ or $A_i = \{s_1, s_2\}$ where $<s_1, s_2, . > \in \bag$. % \in \solnhat$ such that $\abs{\domtwo{b_1}{b_2}} >2$.
	%		\item $B_i = \dom{A_i}$ 
	%		\item \textbf{while} $\abs{A_i} \neq \abs{B_i}$ \textbf{do}.
	%		
	%		\begin{enumerate}
	%			\item $A_i = A_i \cup \{g\}$ where $g \in \tsoln \cap \{\sB \cup \nice \} \setminus A_i$
	%			
	%			\item $B_i = \dom{A_i}$
	%		\end{enumerate}
	%		
	%		\item $\topt = \topt \setminus B_i$, $\tsoln = \tsoln \setminus A_i$.
	%		
	%	\end{enumerate}
	%	
	%	\item $A_r = \tsoln$, $B_r = \topt$.
	%	\end{enumerate}
	%	
	%	Now the sets $A_1, \ldots, A_r$ and $B_1, \ldots, B_r$ follow following properties.
	%	
	%	\begin{enumerate}
	%		\item For $1 \leq i \leq (r-1)$, we have $\abs{A_i} = \abs{B_i}, B_i = \dom{A_i}$ and $\abs{A_r} = \abs{B_r}$.
	%		
	%		\item For $1 \leq i \leq (r-1)$, the set $A_i$ has exactly one facility from $ \sB \cup \bad$. 
	%		
	%		\item The set $A_r$ contains only facilities from 
	%	\end{enumerate} 
	
		Next, we define the following swaps:
	%	
	%	We define the swaps as follows
		\begin{enumerate}
	%\item Like before, for every $\{s_1, s_2\} \in \sw; ~ \{o_1, o_2\} \in \ow$ such that $\domtwo{s_1}{s_2} = \{o_1, o_2\}$. 
	%	We consider
	%	$\swaptwo{s_1}{s_2}{o_1}{o_2}$.
		
	%	\item Like before, for every $s_1\in \good; ~ o_1 \in \ow$ such that $\dom{s_1} = o_1$. 
	%	We consider
	%	$\swapone{s_1}{o_1}$.
		
		\item For the sets $A_i, B_i$ for some $i \leq i \leq (r-1)$, 
		such that $\abs{A_i} = \abs{B_i} \leq p$, perform $\swapmulti{A_i}{B_i}$ :
		 for all $o \in B_i$, $j \in \bo{o}$, reassign $j$ to $o$, 
		 for all $j \in (\cup_{s \in A_i} \bs{s}) \setminus (\cup_{o \in B_i} \bo{o})$, assign $j$ to $\sigmas{\tau(j)}$. Inequalities are assigned weight $1$.

		\item For the sets $A_i, B_i$ for some $i \leq i \leq (r-1)$, such that $\abs{A_i} = \abs{B_i} = q > p$, we perform \textit{shrinking} as follows:
	%	
		%\begin{enumerate}
	%	\item
			for every $<s_1, s_2, o_1> \in \bag$ such that $s_1, s_2 \in A_i$ and $o_1 \in B_i$ \textbf{do}, $A_i = A_i \setminus \{s_1, s_2\}$, $B_i = B_i \setminus {o_1}$, $A_i = A_i \cup \meta{s_1}$. where $\meta{s_1}$ denote a meta node corresponding to nodes $s_1, s_2$.
			%....Aditya123s...can we have some better notation for this, m in the superscript is looking weird with no connection...may be just s-hat-one ....think about it. %We call this as \textit{shrinking}.
			 Note that after shrinking we still have $\abs{A_i} = \abs{B_i} = q' \geq q/2 $. Also as $q > p = 2p' +2$, we have $ q' > p'+1 = p/2$. We consider the swaps as follows
			\begin{enumerate}
			\item If $s \in A_i~ : s \in \bad$ then we consider exactly $q'(q'-1)$ swaps as follows: for all $o \in B_i$,
			for all $s' \in A_i \setminus \{s\}$,  if $s' = \meta{s_j}$ was  created by shrinking of $<s_{j_1}, s_{j_2}, o_{j_1}> $, then we perform $\swaptwo{s_{j_1}}{s_{j_2}}{o_{j_1}}{o}$.
				If $s'$ was not created by shrinking, then we perform $\swapone{s'}{o}$.
			
			Each inequality is assigned weight $1/(q'-1)$. Then, each $s' \in A_i$ participates in exactly $(q') / (q'-1) = 1 + 1/(q'-1) \leq 1+ 1/p' = 1+2/(p-2)$ swaps where as each $o \in B_i$ participate in exactly $1$ swap except for facilities of the type $o_{j_1}$ which participates in as many swaps as $s_{j_1}, s_{j_2}$ do, which is no more than $1+ 2/(p-2)$.   
			
			\item If $s_1, s_2 \in A_i~ : \{s_1, s_2\}$ is a bad pair from $\solnhat$ then we consider exactly $q'(q'-2)$ swaps as follows: for all $o \in B_i$,
			for all $s' \in A_i \setminus \{s_1, s_2\}$,  if $s' = \meta{s_j}$ was  created by shrinking of $<s_{j_1}, s_{j_2}, o_{j_1}> $, then we perform $\swaptwo{s_{j_1}}{s_{j_2}}{o_{j_1}}{o}$.
			If $s'$ was not created by shrinking, then we perform $\swapone{s'}{o}$.
			Each inequality is assigned weight $1/(q'-2)$. Then, each $s' \in A_i$ participates in exactly $(q') / (q'-2) = 1 + 2/(q'-2) \leq 1+ 2/p' = 1+4/(p-2)$ swaps where as each $o \in B_i$ participate in exactly $1$ swap except for facilities of the type $o_{j_1}$ which participates in as many swaps as $s_{j_1}, s_{j_2}$ do, which is no more than $1+ 4/(p-2)$.
			\end{enumerate} 
		
	%	\end{enumerate}
		 
		\end{enumerate}
	
		The cost of the swaps can be analysed as follows. Every $s \in \soln$ is swapped out at most $1 + 4/ (p-2)$ times and every $o \in \opt$ is swapped in atleast once and at most $ 1 + 4/(p-2)$ times. Thus we can now write

			\begin{multline} \label{eq:3.5.4}
					\sumlimits{j \in \cli}{}(\sj{j})		
					<
					(1 + 4/(p-2))\sumlimits{j \in \cli}{} (\oj{j}) +		
					(1 + 4/(p-2))\sumlimits{s \in \light}{}\sumlimits{j \in \bs{s}}{} (\oj{j} + \oj{\tau(j)} + \sj{\tau(j)} - \sj{j}) 
				\end{multline}
			which gives
			
			\begin{multline} \label{eq:3.5.5}
							\sumlimits{j \in \cli}{}(\sj{j})		
							<
							(1 + 4/(p-2))\cost{\opt} +		
							(1 + 4/(p-2))(2\cost{\opt}) = (3 + 12/(p-2))\cost{\opt}
						\end{multline}

			\section{Capacitated $k$-Median with Penalties}
			\label{sec-CkMP-algo}
				$\ckmp$ is a variation of $\ckm$ where we also have a fixed penalty cost $\pj{j}$ associated with each client $j \in \cli$. For a solution $\soln$, let $\pencli{\soln}$ denote the set of clients that pay penalties in $\soln$. The clients $\cli \setminus \pencli{\soln}$ are serviced by the facilities opened in $\soln$. We borrow the notations from Section \ref{sec1}. Then,  
			$\cost{\soln}$ is  $\sumlimits{s \in \soln}{} \sumlimits{j \in \bs{s}}{} \sj{j} + \sumlimits{j \in \pencli{\soln}}{} \pj{j}$. For easy disposition of ideas, we give $(9 + \epsilon, 8/3)$ algorithm and its analysis. The factor is reduced to $(3 + \epsilon)$ in exactly the same manner as is done in Section~\ref{sec-CkM-algo2}.
			
			%As $\soln$ is locally optimal
			
			\subsection{$(9+\epsilon, 8/3)$ Algorithm}
			
	%		\textbf{High Level Idea:} Given and instance $\inst= \instval$ of $\ckmp$, we start with an initial feasible solution with $8k/3$ facilities from $\fac$. The clients are assigned by solving min cost flow problem over the facilities $\soln \cup \{\delta\}$, where $u_{\delta}=\abs{\cli}$ and $\forall j \in \cli, ~\dist{\delta}{j} = \pj{j}$. Clients assigned to $\delta$ pay penalty in the solution $\soln$. We bound the cost of the locally optimal solution, in the way similar to \ckm.
			
		%	\textbf {Algorithm:} 
			We start with an initial feasible solution $\soln_0$ such that $\abs{\soln} = 8k/3$. 
			The clients are assigned by solving min cost flow problem over the facilities $\soln_0 \cup \{\delta\}$, where $u_{\delta}=\abs{\cli}$ and $\forall j \in \cli, ~\dist{\delta}{j} = \pj{j}$. Clients assigned to $\delta$ pay penalty in the solution $\soln_0$. 
			The operations available to the algorithm are $\swapone{s}{o}$ and $\swaptwo{s_1}{s_2}{o_1}{o_2}$ 
			%similar to section \ref{Unit1}.
			Let $\soln$ be the locally optimal solution with $\cost{\soln} = \sumlimits{s \in \soln}{} \sumlimits{j \in \bs{s}}{} \sj{j} + \sumlimits{j \in \pencli{\soln}}{} \pj{j}$.
			% Now we bound the cost of the solution $\soln$.

			\subsection{Analysis}
			Let $\opt$ be an optimal solution for the problem with $\cost{\opt} = \sumlimits{o \in \opt}{} \sumlimits{j \in \bo{o}}{} \oj{j} + \sumlimits{j \in \pencli{\opt}}{} \pj{j}$.	
			Swaps are defined exactly in the same manner as done in Section~\ref{sec1}. However, there is a slight change in the re-assignment.
			Consider the $\swapone{s}{o}: s \in \light$ and $o \in \opt$; reassignments are done as follows: $\forall j \in \bo{o}$, assign $j$ to $o$; $\forall j \in \bs{s} \cap \pencli{\opt}$, $j$ pays penalty $\pj{j}$ and $\forall j \in \bs{s} \setminus \{\bo{o} \cup \pencli{\opt} \}$, $j$ is assigned to $\sigmas{\tau(j)}$. %where the mapping $\tau$ is defined as same as in \ref{mapping}. 
			As $\soln$ is locally optimal, we have

			\begin{multline} \label{eq0.1}
			%	\cost{\swapone{s}{o}} =
				\sumlimits{j \in \bo{o} \setminus \pencli{\soln} }{}(\oj{j} - \sj{j}) +
				\sumlimits{j \in \bo{o} \cap 
					\pencli{\soln}}{}(\oj{j} - \pj{j}) +
			\sumlimits{j \in \bs{s} \cap \pencli{\opt}}{}(\pj{j} - \sj{j}) +	\\
				\sumlimits{j \in \bs{s} \setminus \bo{o} \setminus \pencli{\opt}}{} (\oj{j} + \oj{\tau(j)} + \sj{\tau(j)} - \sj{j}) > 0
			\end{multline}
	%		Note that $\{\bs{s} \setminus \bo{o} \} \cap \pencli{\opt}$ is same as $\bs{s} \cap \pencli{\opt}$. Thus the third term can be simplified to $\sumlimits{j \in \bs{s} \cap \pencli{\opt}}{}(\pj{j} - \sj{j})$.
			
			Similarly for $\swaptwo{s_1}{s_2}{o_1}{o_2}$, we have
			\begin{multline} \label{eq0.2}
				%\cost{\swaptwo{s_1}{s_2}{o_1}{o_2}} =
				\sumlimits{j \in \bo{o_1}\cup \bo{o_2}}{}(\oj{j} - \sj{j}) +
				\sumlimits{j \in \{\bo{o_1}\cup \bo{o_2} \} \cap 
					\pencli{\soln}}{}(\oj{j} - \pj{j}) +\\
				\sumlimits{j \in \{ \{\bs{s_1}\cup \bs{s_2} \} \cap 
					\pencli{\opt}}{}(\pj{j} - \sj{j}) +	\\
				\sumlimits{j \in \{\{\bs{s_1}\cup \bs{s_2} \} \setminus \{\bo{o_1}\cup \bo{o_2} \} \}\setminus \pencli{\opt} \}}{} (\oj{j} + \oj{\tau(j)} + \sj{\tau(j)} - \sj{j}) > 0
			\end{multline}		
	%		The third term can be simplified to $\sumlimits{j \in \{ \{\bs{s_1}\cup \bs{s_2} \} \cap \pencli{\opt}}{}(\pj{j} - \sj{j})$.
	%		
	%		As before a facility $ o \in \opt$ may be swapped in at most three times. $\opt_1, \opt_2, \opt_3$ denote the set of facilities in $\opt$ that are considered in $1,2$ and $3$ swaps respectively as earlier. A 
		and a	facility $s \in \light$ may be swapped out at most $3$ times. Thus summing over all swaps we have 	
			\begin{multline} \label{eq0.3}
				\sumlimits{o \in \opt_1}{}(\sumlimits{j \in \bo{o} \setminus \pencli{\soln}}{}(\oj{j} - \sj{j}) +
				\sumlimits{j \in \bo{o} \cap 
					\pencli{\soln}}{}(\oj{j} - \pj{j})) +\\
				\sumlimits{o \in \opt_2}{}(\sumlimits{j \in \bo{o} \setminus \pencli{\soln}}{} 2(\oj{j} - \sj{j}) +
				\sumlimits{j \in \bo{o} \cap 
					\pencli{\soln}}{} 2 (\oj{j} - \pj{j}) )+\\
				\sumlimits{o \in \opt_3}{}(\sumlimits{j \in \bo{o} \setminus \pencli{\soln}}{} 3(\oj{j} - \sj{j}) +
				\sumlimits{s \in \light}{}\sumlimits{j \in \bo{o} \cap 
					\pencli{\soln}}{} 3 (\oj{j} - \pj{j})) +\\
				\sumlimits{s \in \light}{}\sumlimits{j \in \bs{s} \cap 
					\pencli{\opt}}{} 3 (\pj{j} - \sj{j})+
				\sumlimits{s \in \light}{}\sumlimits{j \in \bs{s}\setminus \pencli{\opt}}{}  3(\oj{j} + \oj{\tau(j)} + \sj{\tau(j)} - \sj{j}) > 0
			\end{multline}		
			
			Note that for all $s \in \soln$, $\pj{j} - \sj{j} \ge 0$  for all $j \in \bs{s} \cap \pencli{\opt}$ otherwise $j$ would have paid penalty in $\soln$ too. Thus we can write $\sumlimits{s \in \light}{}\sumlimits{j \in \bs{s} \cap \pencli{\opt}}{} 3 (\pj{j} - \sj{j}) \leq \sumlimits{s \in \soln}{}\sumlimits{j \in \bs{s} \cap \pencli{\opt}}{} 3 (\pj{j} - \sj{j})$
			
			Rearranging, we get
			\begin{multline} \label{eq0.4}
				\sumlimits{o \in \opt_1 \cup \opt_2 \cup \opt_3}{}\sumlimits{j \in \bo{o} \setminus \pencli{\soln}}{} \sj{j}   +
				\sumlimits{o \in \opt_1 \cup \opt_2 \cup \opt_3}{}\sumlimits{j \in \bo{o} \cap 
					\pencli{\soln}}{}\pj{j} + 
				\sumlimits{j \in \pencli{\opt} \setminus \pencli{\soln}}{} \sj{j} \le\\
				\sumlimits{o \in \opt_1 \cup \opt_2 \cup \opt_3}{}\sumlimits{j \in \bo{o} \setminus \pencli{\soln}}{} 3\oj{j} 
				+ \sumlimits{o \in \opt_1 \cup \opt_2 \cup \opt_3}{}\sumlimits{j \in \bo{o} \cap 
					\pencli{\soln}}{}3\oj{j} 
				+ \sumlimits{j \in \pencli{\opt} \setminus \pencli{\soln}}{} 3\pj{j} \\
				+ \sumlimits{s \in \light}{}\sumlimits{j \in \bs{s}\setminus \pencli{\opt}}{}  6\oj{j} 
			\end{multline}		
			as for
			$\sumlimits{s \in \light}{}\sumlimits{j \in \bs{s}\setminus \pencli{\opt}}{}  (\oj{j} + \oj{\tau(j)} + \sj{\tau(j)} - \sj{j}) = \sumlimits{s \in \light}{}\sumlimits{j \in \bs{s}\setminus \pencli{\opt}}{}2\oj{j}$
			%$j \in \bs{s}\setminus \pencli{\opt}$, we have $(\oj{j} + \oj{\tau(j)} + \sj{\tau(j)} - \sj{j}) = 2\oj{j}$ 
			by the property of $\tau$.
			
			%For $j \in  \pencli{\soln} \cap \pencli{\opt}$ we have $\sj{j} = \oj{j} = \pj{j}$. 
			Adding $\sumlimits{j \in  \pencli{\soln} \cap \pencli{\opt}}{}\pj{j}$ on both the sides and re-arranging, we get
			
			\begin{multline} \label{eq0.4}
				\sumlimits{o \in \opt_1 \cup \opt_2 \cup \opt_3}{}\sumlimits{j \in \bo{o} \setminus \pencli{\soln}}{} \sj{j}   +
				\sumlimits{o \in \opt_1 \cup \opt_2 \cup \opt_3}{}\sumlimits{j \in \bo{o} \cap 
					\pencli{\soln}}{}\pj{j} + 
				\sumlimits{j \in \pencli{\opt} \setminus \pencli{\soln}}{} \sj{j} + \\
				\sumlimits{j \in  \pencli{\soln} \cap \pencli{\opt}}{}\pj{j} \le
				\sumlimits{o \in \opt_1 \cup \opt_2 \cup \opt_3}{}\sumlimits{j \in \bo{o} \setminus \pencli{\soln}}{} 3\oj{j} 
				+ \sumlimits{o \in \opt_1 \cup \opt_2 \cup \opt_3}{}\sumlimits{j \in \bo{o} \cap 
					\pencli{\soln}}{}3\oj{j} \\
				+ \sumlimits{j \in \pencli{\opt} \setminus \pencli{\soln}}{} 3\pj{j} 
				+ \sumlimits{s \in \light}{}\sumlimits{j \in \bs{s}\setminus \pencli{\opt}}{}  6\oj{j} + \sumlimits{j \in  \pencli{\soln} \cap \pencli{\opt}}{}\pj{j}
			\end{multline}			
			
			Rearranging the terms, we get
			\begin{multline} \label{eq0.5}
				\sumlimits{s \in \soln}{}\sumlimits{j \in \bs{s} \setminus \pencli{\opt}}{} \sj{j}  + \sumlimits{j \in \pencli{\opt} \setminus \pencli{\soln}}{} \sj{j} + \sumlimits{j \in  
					\pencli{\soln} \setminus \pencli{\opt}}{}\pj{j} + \sumlimits{j \in  
					\pencli{\soln} \cap \pencli{\opt}}{}\pj{j}
				\le\\
				\sumlimits{o \in \opt_1 \cup \opt_2 \cup \opt_3}{}\sumlimits{j \in \bo{o} \setminus \pencli{\soln}}{} 3\oj{j} 
				+ \sumlimits{j \in \pencli{\opt} \setminus \pencli{\soln}}{} 3\pj{j} \\
				+ \sumlimits{j \in  
					\pencli{\soln} \setminus \pencli{\opt}}{}3\oj{j} 
				+ \sumlimits{j \in  
					\pencli{\soln} \cap \pencli{\opt}}{}\pj{j} 
				+ \sumlimits{s \in \light}{}\sumlimits{j \in \bs{s}\setminus \pencli{\opt} }{}  6\oj{j} 
			\end{multline}
			
			which gives
			\begin{multline} \label{eq0}
				\sumlimits{s \in \soln}{}\sumlimits{j \in \bs{s}}{} \sj{j} 
				% + \sumlimits{j \in \pencli{\opt} \setminus \pencli{\soln}}{} \sj{j} 
				+ \sumlimits{j \in  \pencli{\soln} }{}\pj{j}
				%  + \sumlimits{j \in  \pencli{\soln} \cap \pencli{\opt}}{}\pj{j}
				\le
				\sumlimits{o \in \opt}{}\sumlimits{j \in \bo{o} }{} 3\oj{j} 
				+ \sumlimits{j \in \pencli{\opt}}{} 3\pj{j} 
				%+ \sumlimits{j \in  \pencli{\soln} \setminus \pencli{\opt}}{}3\oj{j} 
				%+ \sumlimits{j \in  \pencli{\soln} \cap \pencli{\opt}}{}\pj{j} 
				+ \sumlimits{s \in \light}{}\sumlimits{j \in \bs{s}\setminus \pencli{\opt} \}}{}  6\oj{j} \\
				\le
				\sumlimits{o \in \opt}{}\sumlimits{j \in \bo{o} }{} 9\oj{j} 
				+ \sumlimits{j \in \pencli{\opt}}{} 3\pj{j} = 9\cost{\opt} 
			\end{multline}
	
			\section{Conclusion}
			In this paper, we presented $((3+\epsilon),8/3)$ factor algorithm for Capacitated $k$-Median Problem  and its penalty version with uniform capacities, using local search heuristic. There is a trade-off between the approximation factor and the cardinality violation between our work and the existing work. Work in~\cite{KPR} is closest to our work as it is the only result for the problem based on local search. We improve upon the results in~\cite{KPR} in terms of cardinality violation.
			
			 It would be interesting to see how these results extend to the problems with non-uniform capacities.

		\bibliography{ref_master_original}
	
\end{document}